\newif\ifdraft%
\definecolor{ocre}{RGB}{255,0,0}
\pgfplotsset{compat=1.7}
\crefname{ALG@line}{Step}{Steps}
\Crefname{step}{Step}{Steps}
\newcommand{\myStepCref}[1]{\hyperref[#1]{Step~\ref{#1}}}
\newcommand{\Vsparse}{\ensuremath{V_{sparse}}\xspace}
\newcommand{\cL}{\mathcal{L}}
\newcommand{\cS}{\mathcal{S}}
\newcommand{\CONGEST}{\ensuremath{\mathsf{CONGEST}}\xspace}
\newcommand{\LOCAL}{\ensuremath{\mathsf{LOCAL}}\xspace}
\newcommand{\lovasz}{Lov\'{a}sz\xspace}
\newcommand{\hhc}{\ensuremath{Q}}
\theoremstyle{plain}
\newtheorem{lemma}{Lemma}
\newtheorem{observation}[lemma]{Observation}
\newtheorem{claim}[lemma]{Claim}
\newtheorem{definition}[lemma]{Definition}
\newtheorem{corollary}[lemma]{Corollary}
\newcommand{\eps}{\varepsilon}
\newcommand{\poly}{\operatorname{\text{{\rm poly}}}}
\newcommand{\vhard}{\ensuremath{\mathcal{V}_{hard}}}
\newcommand{\ehard}{\ensuremath{\mathcal{E}_{hard}}}
\newcommand{\hardcliques}{\ensuremath{\mathcal{C}_{hard}}\xspace}
\newcommand{\hardcliqueshso}{\ensuremath{\mathcal{C}_{HEG}}\xspace}
\newcommand{\typeI}{\ensuremath{Type~I}\xspace}
\newcommand{\typeII}{\ensuremath{Type~II}\xspace}
\newcommand{\typeIplus}{\ensuremath{Type~I^+}\xspace}
\newcommand{\epsvalue}{1/63}
\newcommand{\deltavalue}{28}
\newcommand*\samethanks[1][\value{footnote}]{\footnotemark[#1]}
\title{Towards Optimal Distributed Delta Coloring}
\author{
	Manuel Jakob \thanks{
		This research was funded in whole or in part by the Austrian Science Fund (FWF) \url{https://doi.org/10.55776/P36280}. For open access purposes, the author has applied a CC BY public copyright license to any author-accepted manuscript version arising from this submission.
	}
	\and Yannic Maus \samethanks}
\date{}
\begin{document}

\maketitle

\thispagestyle{empty}

\section*{Abstract}
The $\Delta$-vertex coloring problem has become one of the prototypical problems for understanding the complexity of local distributed graph problems on constant-degree graphs. The major open problem is whether the problem can be solved deterministically in logarithmic time, which would match the lower bound [Chang et al., FOCS'16].  Despite recent progress in the design of efficient $\Delta$-coloring algorithms, there is currently a polynomial gap between the upper and lower bounds.

In this work we present a $O(\log n)$-round deterministic $\Delta$-coloring algorithm for dense constant-degree graphs, matching the lower bound for the problem on general graphs. For general $\Delta$ the algorithms' complexity is $\min\{\widetilde{O}(\log^{5/3}n),O(\Delta+\log n)\}$. All recent distributed and sublinear graph coloring algorithms (also for coloring with more than $\Delta$ colors)  decompose the graph into sparse and dense parts. Our algorithm works for the case that this decomposition has no sparse vertices. Ironically, in recent (randomized) $\Delta$-coloring algorithms, dealing with sparse parts was relatively easy and these dense parts arguably posed the major hurdle. We present a solution that addresses the dense parts and may have the potential for extension to sparse parts.

Our approach is fundamentally different from prior deterministic algorithms and hence hopefully contributes towards designing an optimal algorithm for the general case. Additionally, we leverage our result to also obtain a randomized $\min\{\widetilde{O}(\log^{5/3}\log n), O(\Delta+\log\log n)\}$-round algorithm for $\Delta$-coloring dense graphs that also matches the lower bound for the problem on general constant-degree graphs [Brandt et al.; STOC'16].

\clearpage
\thispagestyle{empty}
\tableofcontents
\clearpage

\setcounter{page}{1}
\section{Introduction}
In the 1940ies Brooks showed that any connected graph with maximum degree $\Delta$ admits a proper vertex coloring with $\Delta$ colors, unless it is a clique on $\Delta+1$ vertices or an odd cycle \cite{brooks_1941}. Recently, this problem has received ample attention in various sublinear models of computation, e.g., in streaming \cite{AKM22}, massively parallel computing \cite{CCDM24}, or most prominently in distributed message passing models \cite{PS95,LLL_lowerbound,Chang2016a,GHKM18,GHKM21,BBKO2021hideandseek,MU21,HM24}. We continue with introducing the standard distributed message passing model and explain thereafter why the $\Delta$-coloring problem is of particular interest to the field.

\vspace{-0.3cm}
\paragraph{The \LOCAL model of distributed computing \cite{linial92}.} A communication network is modeled as an $n$-node graph, where vertices represent computing entities and edges serve as communication channels. Communication occurs in synchronous rounds, with each node sending unbounded messages to neighbors. Nodes have unlimited computation time, and an algorithm's round complexity is the number of rounds until every node outputs its part in a globally consistent solution, e.g., its color.
\smallskip

Efficient distributed algorithms must progress in large parts of the input graph in parallel. Prime examples where this has been achieved are greedy problems like maximal independent set, maximal matching, and $\Delta+1$ coloring, all solvable sequentially with a trivial greedy approach. For instance, in $\Delta+1$ coloring, each node always has an available color, regardless of prior assignments. Greedy problems are well-suited for parallelization, yielding highly efficient algorithms. On constant-degree graphs, these problems can be solved in $\Theta(\log^* n)$ rounds \cite{linial92,panconesi-rizzi,Naor91}.

The $\Delta$-coloring problem is fundamentally different. While coloring with a single color less is probably of no importance to any application, the problem is of an entirely different nature. If we process many nodes in parallel and greedily color large parts of the graph, we are likely to run into situations where extending the coloring to a valid $\Delta$-coloring is impossible.

Thus, even on constant-degree graphs, the $\Delta$-coloring problem has a $\Omega(\log n)$ lower bound for  deterministic round complexity  and an $\Omega(\log\log n)$ lower bound for  randomized round complexity \cite{Chang2016a,LLL_lowerbound}. Understanding, whether these lower bounds are tight (at least on constant-degree graphs) has become one of the major open problem of the field, see, e.g., \cite{BBKO2021hideandseek}:
\begin{center}
	\emph{Is there a logarithmic-time deterministic distributed algorithm for the $\Delta$-coloring problem?}
\end{center}
In this paper, we make partial progress towards answering this question affirmatively. Recent graph coloring algorithms in distributed and other models rely on decomposing the graph into sparse and dense vertices, which are largely treated separately. A vertex is sparse if it has many non-edges in its neighborhood, and dense otherwise. A graph is dense if it contains no sparse vertices. We discuss this sparse-dense decomposition at the end of the section.

\begin{tcolorbox}
	\textbf{Main Contribution I:} There is a deterministic logarithmic-time distributed algorithm for $\Delta$-coloring constant-degree dense graphs (see \Cref{thm:DeltaColoring}).
\end{tcolorbox}
The actual runtime of our algorithm is $O\min\{\widetilde{O}(\log^{5/3}n),O(\Delta+\log n)\}$.\footnote{$\widetilde{O}(f(n))$ hides factors that are poly-logarithmic in $f(n)$.} On constant-degree graphs the complexity of our algorithm matches the lower bound for the problem on general graphs \cite{Chang2016a}. It is intriguing to note that the lower bound holds even for the sparsest graphs, namely trees, while our upper bound holds for dense graphs. Proving lower bounds for dense graphs remains a significant challenge, as we currently lack effective tools. For example, the influential round elimination lower bound technique applies only to trees and does currently not extend to graphs with cycles \cite{Brandt19}. Conceptually, it may be that dense graphs could be colored in sublogarithmic time, but as we only see our algorithm (and its conceptual ideas) as one part of a solution to the general problem, allowing logarithmic time to color dense graphs is reasonable.
\begin{figure}
	\includegraphics[width=\textwidth]{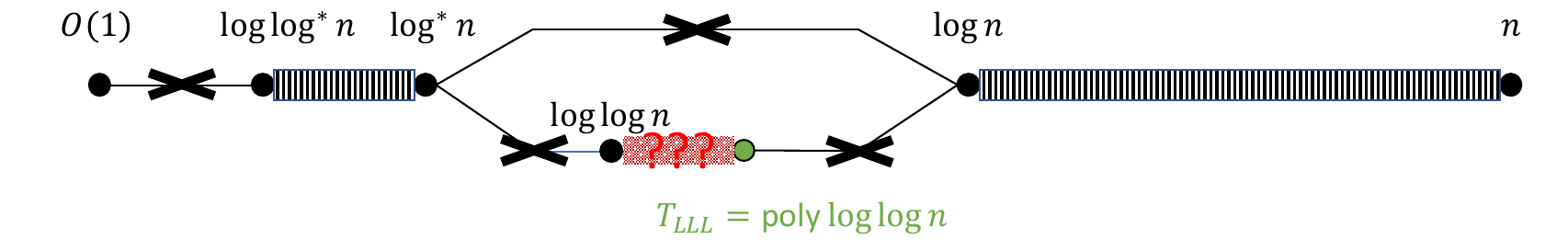}
	\caption{The complexity landscape of LCL problems: Each black dot represents a problem with the depicted asymptotic complexity. For $T_{LLL}$, the randomized lower bound is $\Omega(\log\log n)$, and the best upper bound is $O(\poly\log\log n)$. The top branch reflects deterministic complexities, the bottom branch randomized ones. An $\mathsf{X}$ marks impossible complexities, e.g., no problem has deterministic complexity $\omega(\log^*n)$ and $o(\log n)$. Infinitely many complexities exist in the shaded areas. Current research focuses on the poorly understood red-shaded area, where $\Delta$-coloring is the only natural problem with its best upper bound inside. }		\label{fig:LCLLandscape}
	\vspace{-0.3cm}
\end{figure}
\vspace{-0.3cm}
\paragraph{Why should one care about $\Delta$-coloring and its randomized complexity? } A long line of research mapped out the complexity landscape of so-called locally checkable labeling problems (LCLs), e.g, \cite{naor95,BBHORS21,BBOS18,LLL_lowerbound,BHKLOS18lclComplexity,binary_lcls,Brandt19,Changhierarchy19,Chang2016a,Olivetti2019REtor,BCMOS21,MU21}. In a nutshell LCLs are problems on constant-degree graphs for which a $O(1)$-round \LOCAL algorithm can verify a solution. Examples are many graph coloring problems or the maximal independent set problem.  See \Cref{fig:LCLLandscape} for an illustration of the current state of the art.  There is one central part in the complexity landscape that is extremely poorly understood. These are the problems that admit sublogarithmic-time randomized algorithms, but no sublogarithmic-time deterministic algorithms.  In a seminal result Chang and Pettie showed that any such problem can be solved in the same time as the \lovasz Local Lemma (LLL) \cite{Changhierarchy19}. They also conjectured that LLL on constant-degree graphs can be solved in $O(\log\log n)$ rounds, matching its lower bound established nearly a decade ago \cite{LLL_lowerbound}. Currently, upper bounds are still polynomially larger and used techniques can provably not close the gap \cite{FGLLL17,GG24}.

The  state-of-the-art randomized complexity of the $\Delta$-coloring problem on constant-degree graphs is $O(\log^2\log n)$, making it, to the best of our knowledge, the only natural problem whose state of the art complexity is lower than the general complexity for solving the \lovasz local lemma but still larger than its doubly logarithmic lower bound \cite{GHKM21,GHKM21}.  Thus it is located in the interior of the unknown region. This role in the  complexity landscape provides further motivation for understanding the $\Delta$-coloring problem and its randomized complexity in detail.

\smallskip

In general, it is well-known that the randomized complexity and the deterministic complexity of local distributed graph problems are extremely interconnected and that this is even necessary \cite{Chang2016a}.
Many local graph problems are solved via the shattering method, which first uses a randomized algorithm to solve the problem in large parts of the graph such that unsolved parts consists of exponentially smaller components. On each of these, one can then run the best deterministic algorithm resulting in exponentially faster randomized algorithms.
We also leverage our deterministic algorithm by using it in a shattering framework and obtain the following result.

\begin{tcolorbox}
	\textbf{Main Contribution II:} There is a randomized $O(\log\log n)$-round distributed algorithm that w.h.p.\  $\Delta$-colors constant-degree dense graphs (see \Cref{thm:DeltaColoringRandomized}).
\end{tcolorbox}

Similar to our deterministic algorithm this matches the lower bound on constant-degree graphs \cite{LLL_lowerbound}.
On general graphs it has complexity $\min\{\widetilde{O}(\log^{5/3}\log n), O(\Delta+\log\log n)\}$.

\paragraph{What are dense graphs and why are they challenging for the $\Delta$-coloring problem?}
All recent distributed graph coloring algorithms (but also in models like streaming, massively parallel computation, etc.) rely on decomposing the input graph into a set of sparse vertices and the set of dense vertices, e.g., \cite{hsinhao_coloring,FHM23,HKMT21,CLP20,HKNT22,HM24,ACK19,AKM22}. As explained earlier, a vertex $v$ is sparse if its neighborhood contains many non-adjacent vertices. Sparse vertices are relatively easy to deal with in randomized $\Delta$-coloring algorithms. It is well-known and widely used, see e.g., \cite{EPS15} and all of the aforementioned papers, that the simplest one-round coloring algorithm---in this algorithm each node picks a random candidate color and permanently adopts the color if no neighbors tries the same color---is likely to produce permanent slack for sparse vertices. A node receives \emph{permanent slack} if two of its neighbors are colored with the same color, as it loses only one color of its palette of available colors but two competitors for these colors. Hence, coloring a node with permanent slack brings us back to the greedy regime, i.e., we obtain a problem whose flavor is similar to the one of $\Delta+1$-coloring. This slack generation for sparse graphs has been leveraged for $\Delta$-coloring in \cite{FHM23,HM24,AKM22}, but also to obtain very efficient $\Delta+1$-coloring algorithms in various papers before. Intuitively, slack also helps when coloring with $>\Delta$ colors as a larger gap between the number of available colors and competitors for these colors increases the probability to be colored in subsequent iterations, also see \cite{SW10}.
In all of these works the dense parts of the graph are much more challenging to deal with as dependencies are larger and it is much more difficult to produce slack. On the positive side dense parts of the graph provide additional structure, i.e., dense parts contain many clique-like substructures.  Our algorithm uses this structure to produce slack, even though this requires careful coordination between the nodes in the network.

\subsection{Technical Overview \& Related Work.}
\label{sec:tecOverview}

\paragraph{Prior deterministic algorithms. } The $\Delta$-coloring problem is easy in areas of the graph where there is a node with degree $<\Delta$. We can stall coloring that node, providing temporary slack to its neighbors as the number of competitors for colors reduces due to the stalled neighbor. By iterating this stalling procedure, we obtain multiple layers around the original node, where each layer consists of vertices at the same distance from the origin. Coloring is then performed layer by layer, starting from the outermost layer and progressing inward to the node with degree $<\Delta$.
Prior deterministic algorithms leverage this idea and use the concept of so-called degree choosable components (DCC) \cite{erdos79choosability,vizing76vertex,PS95,GHKM21}.  A \emph{DCC} is a subgraph $H$ that can existentially be $\Delta$-colored, regardless of how the vertices outside of $H$ have been colored. The crucial structural theorem forming the base for prior results claims that any node is contained in a DCC of at most logarithmic diameter. On the positive side a  DCC of logarithmic diameter can be solved by brute-force in logaritmic time in the \LOCAL model, regardless of how we have colored other vertices. The main issue is that DCCs for different vertices are neighboring and may even overlap. Thus they cannot be dealt with simultaneously. Hence, some form of symmetry-breaking between the DCCs is required. As symmetry-breaking usually requires at least $\Omega(\log^* n)$ time, and DCCs have a logarithmic diameter, this approach can probably, even in the best case, only lead to algorithms with runtime $O(\log n\cdot \log^* n)$.\footnote{Just before the PODC deadline the list of accepted papers of STOC 2025 was released. It contains a paper on Distributed $\Delta$-coloring exhibiting the explained behavior of being stuck at $O(\log n\cdot \log^* n)$ even for $\Delta=O(1)$ \cite{personalCommunication}.}
Due to this observation Balliu, Brandt, Kuhn, and Olivetti even explicitly asked for a genuinely different approach to attack the $\Delta$-coloring problem \cite{BBKO2021hideandseek}.

\subsubsection*{Our approach.}
Our approach is inspired by the recent development in randomized algorithms for the problem, namely \cite{FHM23,HM24}. In both of these the focus is on locally producing permanent slack as explained for the case of sparse nodes in the previous section.

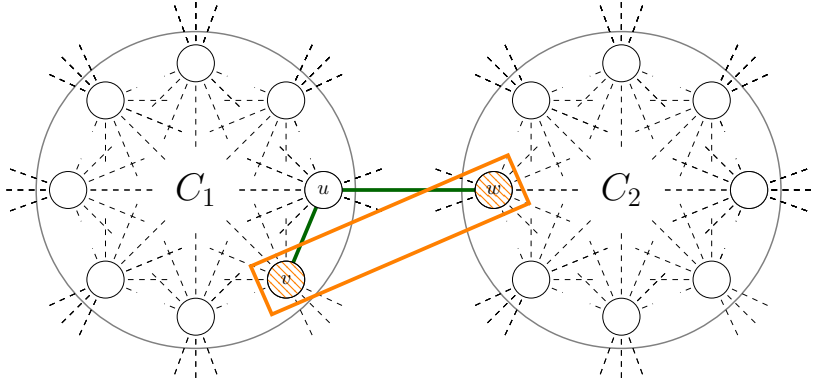
\begin{figure}
	\centering
	\scalebox{0.7}{\begin{tikzpicture}
	\def\xO{0}
	\def\yO{0}
	\def\R{3} 
	\def\Ri{2.4}
	\def\rA{0.35}
	\def\N{8} 
	
	\begin{scope}[shift={(-4,0)}]
		\draw[thick, gray] (\xO,\yO) circle(\R cm);
		
		\foreach \i in {0,1,...,\N} {
			\pgfmathsetmacro{\theta}{360/\N*\i} 
			\pgfmathsetmacro{\x}{\Ri*cos(\theta)}
			\pgfmathsetmacro{\y}{\Ri*sin(\theta)}
			\coordinate (P\i) at (\x,\y);
		}
		
		\foreach \i in {0,1,...,\N} {
			\foreach \j in {0,1,...,\N} {
				\ifnum \i<\j
				\draw[dashed] (P\i) -- ($(P\i)!0.35!(P\j)$);
				\draw[dashed] (P\j) -- ($(P\j)!0.35!(P\i)$);
                \draw[dashed] (P\i) -- ++(360/\N*\i-20:1.2cm);
                \draw[dashed] (P\i) -- ++(360/\N*\i+20:1.2cm);
                \draw[dashed] (P\i) -- ++(360/\N*\i:1.2cm);
				\fi
			}
		}
	\end{scope}
	
	\begin{scope}[shift={(4,0)}]
		\draw[thick, gray] (\xO,\yO) circle(\R cm);
		
		\foreach \i in {0,1,...,\N} {
			\pgfmathsetmacro{\theta}{360/\N*\i} 
			\pgfmathsetmacro{\x}{\Ri*cos(\theta)}
			\pgfmathsetmacro{\y}{\Ri*sin(\theta)}
			\coordinate (Q\i) at (\x,\y);
		}
		
		\foreach \i in {0,1,...,\N} {
			\foreach \j in {0,1,...,\N} {
				\ifnum \i<\j
				\draw[dashed] (Q\i) -- ($(Q\i)!0.35!(Q\j)$);
				\draw[dashed] (Q\j) -- ($(Q\j)!0.35!(Q\i)$);
                \draw[dashed] (Q\i) -- ++(360/\N*\i-20:1.2cm);
                \draw[dashed] (Q\i) -- ++(360/\N*\i+20:1.2cm);
                \draw[dashed] (Q\i) -- ++(360/\N*\i:1.2cm);
				\fi
			}
		}
	\end{scope}
	
	\draw[] (P0) -- (Q4);
	
	\draw[color=black!60!green, line width = 2px] (P7) -- (P0) -- (Q4);
	
	\foreach \i in {0,1,...,\N} {
		\draw[fill=white] (P\i) circle(\rA cm);
	}
    \draw[pattern=north west lines, pattern color=orange] (P7) circle(\rA cm);
	\foreach \i in {0,1,...,\N} {
		\draw[fill=white] (Q\i) circle(\rA cm);
	}
	\draw[pattern=north west lines, pattern color=orange] (Q4) circle(\rA cm);
    \node[] at (P7) {$v$};
    \node[] at (P0) {$u$};
    \node[] at (Q4) {$w$};
    \node[] at (-4,0) {\huge$C_1$};
    \node[] at (4,0) {\huge$C_2$};


    \coordinate (A) at (42,42);
    \edef\angleX{23}
    \newdimen\pax
    \pgfextractx{\pax}{\pgfpointanchor{P7}{center}}
    \newdimen\pay
    \pgfextracty{\pay}{\pgfpointanchor{P7}{center}}
    \newdimen\qax
    \pgfextractx{\qax}{\pgfpointanchor{Q4}{center}}
    \newdimen\qay
    \pgfextracty{\qay}{\pgfpointanchor{Q4}{center}}

    \pgfmathparse{90 - atan((\qax-\pax) / (\qay-\pay))} \let\angleAB\pgfmathresult
    
    \node[draw,orange,fit=(P7) (Q4), rotate fit = \angleAB, text width =170pt, text height = 5pt, inner sep = -10pt, line width = 2px] {};
\end{tikzpicture}}
	\caption{For each hard clique, such as $C_1$ and $C_2$, we identify slack triads—for example, $(u, v, w)$ in $C_1$. The slack pair vertices, $v$ and $w$, are then same-colored (see \Cref{fig:slackTriadVirtual}), providing slack to vertex $u$.}
	\label{fig:slacktriad}
\end{figure}
\textbf{Slack triads (\Cref{fig:slacktriad}).}
To achieve this, we aim to find many so-called slack triads\footnote{A similar yet slightly different concept first appeared in a prior randomized algorithm under the name of a $T$-node \cite{GHKM21}.}.  A slack triad is a triple of nodes $u$, $v,$ and $w$ such that $v$ and $w$ are both neighbors of $u$, but $v$ and $w$ are non-adjacent. In that case, we can same-color $v$ and $w$ providing slack to $u$. We call $u$ the \emph{slack vertex} and $v$ and $w$ the \emph{slack pair} vertices. Using a similar layering approach as for nodes with degree $<\Delta$, we can color all nodes reaching $u$ via a short (optimally constant-hop) path of uncolored vertices. The main obstacle is that slack triads should be
\begin{center}
	\emph{
		i) non-overlapping,
		ii) easily be same-colorable, and
		iii) be locally everywhere in the graph.
	}
\end{center}
The first property is important as the slack at $u$ is useless if $u$ appeared in some other slack triad where $u$ is actually colored. Also, if $v$ and $w$ appear as slack pair vertices in other slack triads we may end up having to same-color large subsets of vertices. In order to satisfy condition $ii)$ we ensure that the conflict graph of same-coloring the respective vertices of all slack triads has maximum degree $\Delta-1$. Note that this is non-trivial as potentially both slack pair vertices could have up to $\Delta-1$ neighboring conflicting slack triads, resulting in a maximum degree of $2\Delta-2\gg\Delta-1$.  See \Cref{fig:slackTriadVirtual} for an illustrations of the virtual conflict graph formed by slack triads. Property iii) is clearly needed to color all vertices of the graph.

\begin{figure}
	\centering
	\scalebox{0.65}{\begin{tikzpicture}
    \def\R{0.35}
    \def\dist{2.5}
    \def\offset{25}
    \def\xa{-3.5}
	\def\ya{0}
    \def\xb{8}
	\def\yb{-3.5}
    \def\xc{4.5}
	\def\yc{-2.5}
    \def\xd{-8}
	\def\yd{-3.5}
    \def\colorarray{{"green","red","blue","red"}}
    \def\colora{red}
    \def\colorb{green}
    \def\colorc{blue}
    \def\colord{red}

    \def\rotationa{-35}
    \def\rotationb{110}
    \def\rotationc{200}
    \def\rotationd{37}

    \coordinate (P0) at (\xa,\ya);
    \coordinate (P1) at (\xb,\yb);
    \coordinate (P2) at (\xc,\yc);
    \coordinate (P3) at (\xd,\yd);

    \coordinate (P00) at ({\xa + \dist*cos(-\offset+\rotationa)}, {\ya + \dist*sin(-\offset+\rotationa)});
    \coordinate (P01) at ({\xa + \dist*cos(\offset+\rotationa)}, {\ya + \dist*sin(\offset+\rotationa)});
    \coordinate (P10) at ({\xb + \dist*cos(-\offset+\rotationb)}, {\yb +\dist*sin(-\offset+\rotationb)});
    \coordinate (P11) at ({\xb + \dist*cos(\offset+\rotationb)}, {\yb +\dist*sin(\offset+\rotationb)});
    \coordinate (P20) at ({\xc + \dist*cos(-\offset+\rotationc)}, {\yc +\dist*sin(-\offset+\rotationc)});
    \coordinate (P21) at ({\xc + \dist*cos(\offset+\rotationc)}, {\yc +\dist*sin(\offset+\rotationc)});
    \coordinate (P30) at ({\xd + \dist*cos(-\offset+\rotationd)}, {\yd +\dist*sin(-\offset+\rotationd)});
    \coordinate (P31) at ({\xd + \dist*cos(\offset+\rotationd)}, {\yd +\dist*sin(\offset+\rotationd)});
    
    \foreach \i in {0,1,2,3}{
        \newdimen\pax
        \pgfextractx{\pax}{\pgfpointanchor{P\i0}{center}}
        \newdimen\pay
        \pgfextracty{\pay}{\pgfpointanchor{P\i0}{center}}
        \newdimen\qax
        \pgfextractx{\qax}{\pgfpointanchor{P\i1}{center}}
        \newdimen\qay
        \pgfextracty{\qay}{\pgfpointanchor{P\i1}{center}}

        \coordinate (M\i) at ({(\pax + \qax) / 2}, {(\pay + \qay) / 2});
    }
    
    \draw[color=orange, line width = 2px] (P11) -- (P20);
    \draw[color=orange, line width = 2px] (P10) -- (P01);
    \draw[color=orange, line width = 2px] (P20) -- (P01);
    \draw[color=orange, line width = 2px] (P00) -- (P21);
    \draw[color=orange, line width = 2px] (P21) -- (P30);
    \draw[color=orange, line width = 2px] (P31) -- (P00);
    \draw[color=orange, line width = 2px] (P31) -- ++(200:2.5cm);
    \draw[color=orange, line width = 2px] (P21) -- ++(350:2.5cm);
    \draw[color=orange, line width = 2px] (P10) -- ++(300:2.5cm);
    \foreach \i in {0,1,2,3}{
        
        \newdimen\pax
        \pgfextractx{\pax}{\pgfpointanchor{P\i0}{center}}
        \newdimen\pay
        \pgfextracty{\pay}{\pgfpointanchor{P\i0}{center}}
        \newdimen\qax
        \pgfextractx{\qax}{\pgfpointanchor{P\i1}{center}}
        \newdimen\qay
        \pgfextracty{\qay}{\pgfpointanchor{P\i1}{center}}

        \pgfmathparse{90 - atan((\qax-\pax) / (\qay-\pay))} \let\angleAB\pgfmathresult

        \node[fill=white,fit=(P\i0) (P\i1), rotate = \angleAB, minimum width =115pt, minimum height = 35pt, inner sep = -100pt, line width = 2px] {};
    }

    \draw[dashed, line width = 1px] (P11) -- (P20);
    \draw[dashed, line width = 1px] (P10) -- (P01);
    \draw[dashed, line width = 1px] (P20) -- (P01);
    \draw[dashed, line width = 1px] (P00) -- (P21);
    \draw[dashed, line width = 1px] (P21) -- (P30);
    \draw[dashed, line width = 1px] (P31) -- (P00);
        
    \foreach \i in {0,1,2,3}{
        \newdimen\pax
        \pgfextractx{\pax}{\pgfpointanchor{P\i0}{center}}
        \newdimen\pay
        \pgfextracty{\pay}{\pgfpointanchor{P\i0}{center}}
        \newdimen\qax
        \pgfextractx{\qax}{\pgfpointanchor{P\i1}{center}}
        \newdimen\qay
        \pgfextracty{\qay}{\pgfpointanchor{P\i1}{center}}

        \pgfmathparse{90 - atan((\qax-\pax) / (\qay-\pay))} \let\angleAB\pgfmathresult
        \draw[] (P\i) -- (P\i0);
        \draw[] (P\i) -- (P\i1);

        \node[draw,orange,fit=(P\i0) (P\i1), rotate = \angleAB, minimum width =105pt, minimum height = 35pt, inner sep = -100pt, line width = 2px, fill = orange,fill opacity = 0.2] {};
        
        \pgfmathparse{\colorarray[\i]} 
        \edef\colorname{\pgfmathresult}
        \draw[fill=white] (P\i) circle(\R cm);
        \draw[pattern = checkerboard light gray,opacity = 0.3] (P\i) circle(\R cm);
        \draw[fill=white] (P\i0) circle(\R cm);
        \draw[pattern=north west lines, pattern color=\colorname] (P\i0) circle(\R cm);
        \draw[fill=white] (P\i1) circle(\R cm);
        \draw[pattern=north west lines, pattern color=\colorname] (P\i1) circle(\R cm);

    }
    
\end{tikzpicture}
    }
	\caption{For coloring the hard cliques in the graph, we identify slack pair vertices (orange boxes), which provide the slack vertices (checkboard) with one unit of slack. To color the slack pair vertices, we construct a virtual graph where each slack pair (orange box) is treated as a single vertex. Edges between these vertices (orange) exist if there is an underlying edge in the original graph (black).}
		\label{fig:slackTriadVirtual}
\end{figure}
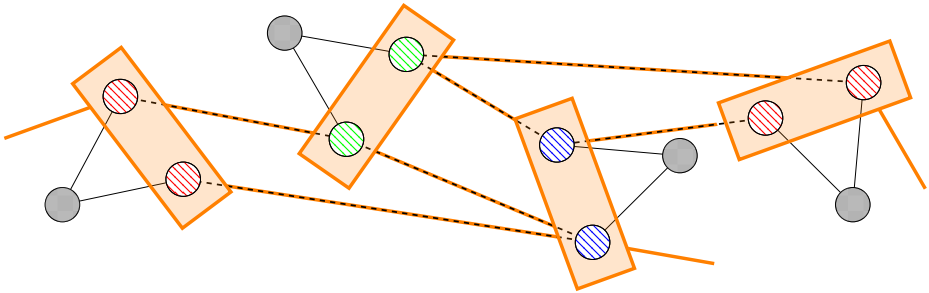

\textbf{Finding slack triads in extremely dense graphs.} The details of our algorithm for finding a set of slack triads satisfying properties i)--iii) are involved. For illustration purposes let us focus on some locally really dense part of the graph that only consists of  adjacent cliques of size $\Delta$. Fix one such clique $C$. As there are no $\Delta+1$ cliques in the graph each vertex $u$ of $C$ as one external neighbor $v_u\notin C$. For this exposition also assume that each node in $C$ has a distinct external neighbor.
The high level idea is to find two edges $(u_1,v_{u_1})$ and $(u_2,v_{{u_2}})$ leaving the clique that are not claimed by the clique on the other side. If we find these edges, nodes $u_1$ and $v_{u_1}$ and $u_2$ form a slack triad, as $u_1$ is adjacent to $v_{u_1}$ and $u_2$, and $v_{u_1}$ and $u_2$ are non-adjacent.

The \emph{sinkless orientation problem} has distributed complexity $\Theta(\log n)$ and asks to orient the edges of a graph such that every node with degree at least $3$ has one outgoing edge. By splitting each vertex into two virtual vertices, one can also ensure that every node with degree $6$ has at least two outgoing edges. If we now split each clique into two virtual \emph{vertices} and orient the intra-clique edges, we can achieve that each clique has found two edges (its outgoing edges) that are not claimed by the clique at the other end and we can form a slack triad with slack pair vertices $v_{u_1}$ and $u_2$.
This construction satisfies property $i)$ and $iii)$. But we may entirely violate property $ii)$ as all the other edges incident to the respective cliques of $v$ and $w$ may be \emph{claimed} by other cliques and hence it may be that the entire cliques of $v_{u_1}$ and $u_2$ are used in other slack triads. In order to ensure property $ii)$ we thus  use another logarithmic-time algorithm to sparsify the set of participating  intra-cluster edges. This completes the high level procedure of finding slack triads in the case where the graph locally only consists of $\Delta$ cliques.

\paragraph{Finding slack triads in dense graphs.} When the graph is less dense, each clique can have up to $\Omega(\Delta)$ outgoing edges. If we would mimic the same aproach, it may be that one vertex has multiple outgoing and multiple incoming edges. In that case it is unclear whether it should take part in the slack triad of its own clique or the slack triad of a neighboring clique. The first idea is to compute a maximal matching of intra-cluster edges and restrict the previous algorithm to the edges in the matching. The benefit is that every node has only one incident intra-cluster edge and the same reasoning as before works fine. The challenge is that a maximal matching is a greedy-type problem and hence it does not satisfy the locally everywhereness of property iii), that is, we may have parts in the graph that have no matching edge. Now, if we would simply proceed we may run into a situation where vertices cannot be colored in the sequel, not even existentially. Hence, in our full solution, we re-balance the matching to also achieve property $iii)$. To that end we develop a proposal algorithm in which cliques propose to grab intra-cluster edges. We manage proposals such that each intra-cluster edge only receives few proposals, but each clique sends many of them. Hence, on average each clique can grab some edges. We indeed ensure that \emph{every} clique grabs sufficiently many edges by reducing this re-balancing problem to the hyperedge grabbing problem of \cite{BMNSU25}.
This overview omits several technical details and in our full algorithm we also deal with parts of the graph that do not form proper cliques.

\paragraph{Why does our approach not (yet) extent to sparse graphs?}
In principle the approach of finding slack triads seems to be a promising candidate to extend to sparse graphs, and the same is true for the general scheme all of our subroutines. Instead of applying it to cliques, one may apply it to constant-radius subgraphs that contain enough outgoing edges (to send a large number of proposals). The main point where our technique does not extend is that we are unable to glue two edges grabbed by such a part together to form a slack triad. While we still think that our general approach is fruitful, a full solution should probably be less focused on edges but rather on higher order objects such as hyperedges. Ironically, as already mentioned earlier, for large $\Delta$ sparse parts are extremely simple for randomized algorithms as slack is produced everywhere with high probability with a one round color trial procedure.

\paragraph{Outline} In \Cref{sec:preliminaries}, we introduce dense graphs and present subroutines from prior work. In \Cref{sec:deterministicDeltaColoring}, we present our deterministic $\Delta$-coloring algorithm along with all essential proofs. \Cref{sec:randomizedalg} presents our randomized $\Delta$-coloring algorithm. \Cref{sec:relatedWork} contains further related work, \Cref{sec:missingProofs} contains deferred proofs, and \Cref{sec:subroutines} complementary subroutines.

\section{Preliminaries: ACD, Dense Graphs, and Hyperedge Grabbing} \label{sec:preliminaries}
\paragraph{Notation.} Given a graph $G=(V,E)$ and a vertex $v\in V$, let  $N(v)$ denote the set of $v$'s neighbors. A clique or an almost clique $C$ is a subgraph of $G$ with additional properties specified below. Let $V(C)$ represent the set of vertices contained in $C$. Sometimes, we also simply write $v\in C$ to depict a vertex in a given clique $C$. Given a set of cliques $\mathcal{C}$, we define $V(\mathcal{C}) = \bigcup_{C \in \mathcal{C}} V(C)$ as the union of the vertex sets of all cliques in $\mathcal{C}$. Furthermore, let $\deg(C)=|\{\{u,v\}\in E \mid u\in C, v\notin C\}|$ denote the degree of a clique $C$, defined as the number of outgoing edges of the clique. Throughout let $\Delta=\max_{v\in V}|N(v)|$ be the maximum degree of the input graph $G$. For an integer $k$ let $[k]$ denote the set $\{0,\ldots,k-1\}$.
Let $v$ be a node with available color palette $P(v)$ in a subgraph $\tilde{G}\subseteq G$. The \emph{slack} of $v$ in $\tilde{G}$ is $|P(v)|-d$, where $d$ is the number of uncolored neighbors of $v$ in $\tilde{G}$. A vertex has \emph{slack} if its slack is positive.

\medskip

We next explain the necessary basics of the almost clique decomposition that we require for our theorem. In the following $0<\eta, \eps<1$ are sufficiently small constants.
Two adjacent nodes $u$ and $v$ are \emph{friends}, if $|N(v)\cap N(u)|\geq (1-\eta)\Delta$. A vertex is \emph{$\eta$-dense} if it has at least $(1-\eta)\Delta$ friend neighbors, otherwise it is \emph{$\eta$-sparse}.
\begin{claim}[\cite{ACK19}]
	Any $\eta$-sparse vertex has at most $(1-\eta^2)\binom{\Delta}{2}$ edges in its neighborhood.
\end{claim}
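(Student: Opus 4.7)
The plan is to lower bound the number of non-edges inside $N(v)$ by at least $\eta^2\binom{\Delta}{2}$ and then subtract from the trivial upper bound $\binom{\Delta}{2}$ on the number of unordered pairs in $N(v)$.

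First I would unpack the sparsity hypothesis. Write $d=|N(v)|\leq \Delta$, let $S\subseteq N(v)$ be the non-friend neighbors of $v$, and set $s=|S|$. Since $v$ has fewer than $(1-\eta)\Delta$ friend neighbors, $s\geq d-(1-\eta)\Delta+1$; in the main case $d=\Delta$ this gives $s\geq \eta\Delta+1$. By the definition of a non-friend, any $u\in S$ satisfies $|N(v)\cap N(u)|\leq (1-\eta)\Delta-1$ (strict inequality together with integrality of the count), so $u$ has at least $d-1-((1-\eta)\Delta-1)=d-(1-\eta)\Delta$ non-neighbors inside $N(v)\setminus\{u\}$, which equals $\eta\Delta$ when $d=\Delta$.

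Next I would double count non-edges inside $N(v)$ by summing over endpoints in $S$. Since each non-edge is counted at most twice (once per endpoint that lies in $S$), the number of non-edges is at least $\tfrac{1}{2}\,s\,(d-(1-\eta)\Delta)$. For $d=\Delta$ this yields $\tfrac{1}{2}(\eta\Delta+1)\eta\Delta\geq \eta^2\binom{\Delta}{2}$, and subtracting from $\binom{\Delta}{2}$ gives the claim.

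The remaining point, and the main piece of bookkeeping, is the low-degree case $d<\Delta$. If $d<(1-\eta)\Delta$, sparsity is automatic and the claim follows from the direct comparison $\binom{d}{2}<\binom{(1-\eta)\Delta}{2}\leq (1-\eta^2)\binom{\Delta}{2}$, which reduces to the inequality $(1-\eta)\Delta-1\leq (1+\eta)(\Delta-1)$ valid for any $\Delta\geq 1$. If $(1-\eta)\Delta\leq d<\Delta$, one reuses the double-counting argument with $s\geq d-(1-\eta)\Delta+1$ and checks that the resulting upper bound on edges in $N(v)$ is monotone in $d$, so it is maximized at $d=\Delta$. Beyond this mild case split, the proof is essentially a short double-counting argument from the definitions.
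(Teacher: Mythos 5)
The paper never proves this claim---it is imported as a black box from [ACK19] with no argument given---so there is no in-paper proof to compare against; I can only assess your argument on its own. Your double-counting strategy is the standard derivation and is structurally sound: lower-bound the non-edges in $N(v)$ by summing, over the roughly $\eta\Delta$ non-friend neighbors, their roughly $\eta\Delta$ non-neighbors inside $N(v)$, halve for double counting, and dispose of the low-degree cases separately. Your handling of $d<(1-\eta)\Delta$ is correct, and your monotonicity claim for the intermediate range checks out, since the resulting upper bound $\binom{d}{2}-\binom{d-(1-\eta)\Delta+1}{2}$ is increasing in $d$ for $d\ge(1-\eta)\Delta$.

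The one genuine soft spot is the repeated step ``strict inequality plus integrality gives $\le(1-\eta)\Delta-1$.'' That deduction is only valid when $(1-\eta)\Delta$ is an integer; in general an integer strictly below $(1-\eta)\Delta$ is only $\le\lceil(1-\eta)\Delta\rceil-1$, which costs you the ``$+1$'' in your bound on $s$ and reduces the per-vertex non-neighbor count from $\eta\Delta$ to possibly $\lfloor\eta\Delta\rfloor$. This is not purely cosmetic: with $\Delta=19$ and $\eta=1/10$, take $N(v)$ to be $K_{19}$ minus a single edge $\{a,b\}$; then $a$ and $b$ share only $17<17.1=(1-\eta)\Delta$ neighbors with $v$, so $v$ has exactly $17<(1-\eta)\Delta$ friends and is $\eta$-sparse, yet its neighborhood has $170>(1-\eta^2)\binom{19}{2}=169.29$ edges. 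So the claim with the exact constant $1-\eta^2$ fails under the paper's definitions, and your proof cannot close at precisely the point where you invoke integrality. The fix is to either assume $(1-\eta)\Delta\in\mathbb{Z}$ (harmless in context) or weaken the conclusion to $(1-\eta^2)\binom{\Delta}{2}+O(\eta\Delta)$; either way the qualitative content---an $\eta$-sparse vertex has $\Omega(\eta^2\Delta^2)$ non-edges in its neighborhood---is all the paper uses, and your argument does establish that.
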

Based on these definitions, any graph can be decomposed into sparse and dense vertices, which are further partitioned into \emph{almost cliques (ACs)}. Different coloring algorithms provide slightly varying guarantees for these ACs. In the basic version of the decomposition, each connected component of the graph induced by dense vertices and friend edges forms an AC \cite{HSS18,ACK19}. This decomposition can be computed in $O(1)$ rounds in the \LOCAL model, the ACs have diameter two, and each AC is of size $(1\pm O(\eta))\Delta$ \cite{HSS18,ACK19}. We use the following  ACD that can be obtained via a deterministic $O(1)$-round postprocessing of the basic ACD \cite{FHM23,HM24}:

\begin{restatable}[ACD computation \cite{HSS18,AKM22,FHM23,HM24}]{lemma}{lemACDcomputation}
	\label{lem:acd}
	Let $\eps=\epsvalue$.
	For any graph $G=(V,E)$, there is a partition (\emph{almost-clique decomposition (ACD)} of $V$ into sets $\Vsparse$ and $C_1, C_2, \ldots, C_t$ such that each node in $\Vsparse$ is $\Omega(\epsilon^2\Delta)$-sparse\ and for every  $i\in [t]$,
	\begin{compactenum}
		\renewcommand{\labelenumi}{(\roman{enumi})}
		\item $(1 - \eps/4)\Delta \le |C_i|\le (1+\eps)\Delta$\ ,
		\item Each $v\in C_i$ has at least $(1-\eps)\Delta$ neighbors in $C_i$:   $|N(v)\cap C_i|\ge (1-\eps)\Delta$\ ,
		\item Each node $u \not\in C_i$ has at most $(1-\eps/2)\Delta$ neighbors in $C_i$: $|N(u)\cap C_i|\le (1-\eps/2)\Delta$.
	\end{compactenum}
	Further, there is a deterministic $O(1)$-round \LOCAL algorithm to compute a valid ACD.
\end{restatable}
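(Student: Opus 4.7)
The plan is to derive the refined ACD by starting from the basic construction of \cite{HSS18,ACK19} and then applying a deterministic $O(1)$-round clean-up in the style of \cite{AKM22,FHM23,HM24}, tuning the constants until $\eps=1/63$ suffices. First I would fix a parameter $\eta\ll\eps$ and declare each node to be $\eta$-dense or $\eta$-sparse according to the definition given just above the lemma. Put every $\eta$-sparse node into $\Vsparse$; each such node has at most $(1-\eta^2)\binom{\Delta}{2}$ edges in its neighborhood by the cited claim, so it is $\Omega(\eta^2\Delta)$-sparse as required. On the remaining vertices, form the friendship graph (edges between adjacent friends) and take its connected components as the initial almost cliques $C_1^{0},\dots,C_t^{0}$. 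Since friendship is a $1$-hop predicate, this whole step is $O(1)$ rounds. The standard analysis of \cite{HSS18,ACK19} yields that each $C_i^{0}$ has diameter two in the friendship graph, that $|C_i^{0}|=(1\pm O(\eta))\Delta$, and that every $v\in C_i^{0}$ has $(1-O(\eta))\Delta$ neighbors inside $C_i^{0}$.

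Next I would apply a single-shot deterministic postprocessing round to separate the cliques cleanly and obtain exactly the stated inequalities (i)-(iii) with constant $\eps=1/63$. The idea is: (a) evict from $C_i^{0}$ any vertex $v$ with $|N(v)\cap C_i^{0}|<(1-\eps)\Delta$ and reassign it to $\Vsparse$; (b) absorb into $C_i^{0}$ any external vertex $u$ with $|N(u)\cap C_i^{0}|>(1-\eps/2)\Delta$; and (c) verify the final size bound $(1-\eps/4)\Delta\le|C_i|\le(1+\eps)\Delta$. Each of these checks is $O(1)$-hop local, hence the whole procedure runs in $O(1)$ rounds in \LOCAL. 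Because $\eta$ is chosen much smaller than $\eps$, only an $O(\eta)$-fraction of vertices are moved, so sizes remain within the tighter window after the adjustment. Vertices evicted to $\Vsparse$ still satisfy the sparsity requirement since by (a) they have at least $\eps\Delta$ non-neighbors in what used to be their clique, and the basic ACD already shows those non-neighbors are pairwise ``spread,'' giving $\Omega(\eps^2\Delta)$ non-edges in the vertex's neighborhood.

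The main obstacle — and the place where one has to do real work — is to show that the three conditions can be attained \emph{simultaneously}. Concretely, the absorption step (b) might push a clique past the upper size bound $(1+\eps)\Delta$, and the eviction step (a) might depress it below $(1-\eps/4)\Delta$. To handle this, I would (i) prove that the external vertices eligible for absorption into $C_i^{0}$ are few — at most $O(\eta)\Delta$ — by arguing that any such vertex would otherwise be a friend of many $v\in C_i^{0}$ and already belong to the component; and (ii) prove that the number of evicted vertices is also bounded by $O(\eta)\Delta$ using that $v\in C_i^{0}$ has $(1-O(\eta))\Delta$ friends in $C_i^{0}$ only when $v$ is $\eta$-dense, so $v$ fails the sharper threshold only when it is ``near-sparse.'' Choosing $\eta$ small enough relative to $\eps=1/63$ makes both error terms absorbable into the $\eps/4$ slack in (i). Finally, one verifies that properties (ii) and (iii) are stable under the operations because (a) directly enforces (ii) and (b) directly enforces (iii), and neither step can violate the other by more than the $O(\eta)\Delta$ reshuffling just bounded. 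Packaging these checks with the basic ACD and the sparsity claim gives the lemma; the round complexity is $O(1)$ throughout.
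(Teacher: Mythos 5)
The paper does not actually prove this lemma: it is imported as a black box from \cite{HSS18,AKM22,FHM23,HM24}, with only the one-line remark that the stated ACD ``can be obtained via a deterministic $O(1)$-round postprocessing of the basic ACD.'' Your overall plan (basic friendship-component decomposition with parameter $\eta\ll\eps$, followed by a constant-round evict/absorb clean-up) is exactly the template those works use, so the architecture is right. However, two steps of your sketch do not go through as written.

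First, your argument that an evicted vertex $v$ belongs in $\Vsparse$ is broken: you claim that ``at least $\eps\Delta$ non-neighbors in what used to be its clique'' yield $\Omega(\eps^2\Delta)$ non-edges \emph{in $v$'s neighborhood}. But non-neighbors of $v$ inside $C_i^0$ are not elements of $N(v)$, so they contribute nothing to the sparsity of $v$. The correct (standard) argument is different: $|N(v)\cap C_i|<(1-\eps)\Delta$ forces $v$ to have $\ge\eps\Delta$ neighbors \emph{outside} $C_i$; each such external neighbor $u$ satisfies $|N(u)\cap C_i|\le(1-\eps/2)\Delta$ by property (iii) of the (already cleaned) decomposition, hence is non-adjacent to $\ge(\eps/4)\Delta$ of $v$'s internal neighbors, and summing over the $\eps\Delta$ external neighbors gives $\Omega(\eps^2\Delta^2)$ non-edges inside $N(v)$. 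Note this uses property (iii) for the \emph{other} members of $N(v)$, so the order in which you establish the properties matters and must be made explicit. Second, your absorption step (b) tests $u$ against $C_i^{0}$, but property (iii) must hold against the \emph{final} $C_i$: after $O(\eta)\Delta$ vertices are absorbed, an external $u$ with $|N(u)\cap C_i^{0}|$ just below $(1-\eps/2)\Delta$ can end up above the threshold with respect to the enlarged clique, and symmetrically eviction can invalidate (ii) for remaining members. Saying ``neither step can violate the other by more than the $O(\eta)\Delta$ reshuffling'' is not yet a proof; you need to run the tests against perturbed thresholds (e.g.\ absorb at $(1-\eps/2)\Delta - c\eta\Delta$, evict at $(1-\eps)\Delta + c\eta\Delta$) and verify that the final sets satisfy the exact stated inequalities after a single pass, or argue that one additional pass stabilizes. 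With those two repairs the proof matches the cited constructions.
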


\begin{observation}\label{obs:outdegree}
	Any vertex $v$ of an AC $C$ has at most $\eps\Delta$ neighbors outside of $C$.
\end{observation}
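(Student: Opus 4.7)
The plan is to derive the observation directly from the ACD guarantees of \Cref{lem:acd}, using only property (ii) together with the global maximum-degree bound. No extra combinatorics or additional structural arguments seem necessary: the conclusion is essentially a one-line consequence of the fact that most of $v$'s degree budget is already consumed by intra-clique neighbors.

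Concretely, I would fix an arbitrary $v \in C = C_i$ and split its neighborhood as $N(v) = (N(v)\cap C) \cup (N(v)\setminus C)$. By property (ii) of \Cref{lem:acd}, $|N(v)\cap C| \ge (1-\eps)\Delta$, and since $G$ has maximum degree $\Delta$ we have $|N(v)| \le \Delta$. Subtracting gives
\[
|N(v)\setminus C| \;=\; |N(v)| - |N(v)\cap C| \;\le\; \Delta - (1-\eps)\Delta \;=\; \eps\Delta,
\]
which is exactly the claim.

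There is no real obstacle here: the observation is a bookkeeping consequence of the ACD, essentially restating property (ii) in terms of outside-neighbors rather than inside-neighbors. The only subtlety worth noting is that $v$'s total degree could be strictly less than $\Delta$, in which case the bound is only stronger; and the size bounds on $|C_i|$ in property (i) play no role here since we are counting neighbors of $v$, not members of $C$.
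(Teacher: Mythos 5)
Your proof is correct and is exactly the intended argument: the paper states this as an unproved observation, and the implicit justification is precisely property~(ii) of \Cref{lem:acd} combined with $|N(v)|\le\Delta$, as you write. Nothing further is needed.
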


\begin{definition}[Dense graph]
	We call a graph \emph{dense} the ACD computation executed with that $\eps=1/63$ does not contain any nodes that are classified as sparse.
\end{definition}
While $\Delta$-coloring algorithms  (\Cref{thm:DeltaColoring,thm:DeltaColoringRandomized}) work for any  $\Delta$, for $\Delta<63$ any dense graph can only consists of isolated cliques due to the choice of $\eps$. We believe that the parameter $\eps$ can be made larger classifying more graphs as dense at the cost of  complicating the algorithm.

\paragraph{Hyperedge Grabbing (HEG).} In a hypergraph, the \emph{minimum degree} $\delta$ is the smallest number of hyperedges incident to any vertex, while the \emph{(maximum) rank} $r$ is the largest number of vertices contained in any hyperedge. Given a hypergraph with rank $r$ and minimum degree $\delta$, the \emph{hyperedge grabbing problem (HEG)} asks each vertex to grab one of its incident edges such that no hyeredge is grabbed by more than one vertex. The problem is equivalent to the hypergraph sinkless orientation problem from \cite{BMNSU25} and can be solved in logarithmic time with their algorithm if the hypergraph's rank is sufficiently smaller than the minimum degree.
\begin{lemma}[\cite{BMNSU25}]\label{theorem:HEG}
	There is a deterministic $O(\log_{\frac{\delta}{r}}n)$-round algorithm for computing an HEG in any $n$-node multihypergraph with minimum degree $\delta$ and maximum rank $r<\delta$.
\end{lemma}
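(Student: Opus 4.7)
The plan is to cast HEG as a bipartite matching problem on the incidence graph $B$ of the hypergraph, placing vertices of the hypergraph on one side and hyperedges on the other; solving HEG is then equivalent to finding a matching of $B$ that saturates every vertex-side node. Since initially every vertex has $B$-degree $\ge\delta$ and every hyperedge has $B$-degree $\le r$, Hall's condition is satisfied with a slack factor of $\delta/r$: for any set $S$ of vertex-side nodes, $|N_B(S)|\ge |S|\cdot\delta/r$. This slack is the source of the $\delta/r$ base in the round complexity.

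The algorithm runs in phases. Maintain the set $U_t$ of vertices that have not yet grabbed a hyperedge and the set $E_t$ of still-free hyperedges. In each phase, every $v\in U_t$ proposes one incident hyperedge from $E_t$, and each hyperedge that receives at least one proposal declares a single proposer as its owner (say, by smallest identifier). Successful proposers grab and leave $U_t$; their hyperedges leave $E_t$. The core claim to establish is $|U_{t+1}|\le |U_t|\cdot r/\delta$, so that after $O(\log_{\delta/r}n)$ phases $|U_t|$ drops below $1$. The shrinkage follows from a fractional-matching argument: the restricted bipartite graph between $U_t$ and $E_t$ still inherits a comfortable Hall slack, so there exists a fractional matching saturating $U_t$ in which every hyperedge carries load at most $r/\delta$. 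Rounding this fractional assignment to integral proposals deterministically in $O(1)$ \LOCAL rounds then ensures that only an $r/\delta$ fraction of proposers lose contention on their hyperedge.

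The main obstacle I anticipate is precisely this deterministic rounding step. A random proposal procedure yields the desired per-phase progress via a one-line Chernoff argument, but derandomizing it without inflating the round complexity beyond $O(\log_{\delta/r}n)$ requires a careful combinatorial rounding on the auxiliary contention graph, plausibly via a defective coloring of the hyperedge conflict structure, a potential-function argument that locally certifies global progress, or a reduction to a simpler primitive in which local rounding is efficient. The condition $r<\delta$ is essential throughout: once $r\ge\delta$ the Hall slack degenerates and per-phase shrinkage is no longer guaranteed. Extending the argument from hypergraphs to multihypergraphs with parallel hyperedges requires only bookkeeping (distinguishing parallel copies by identifier) and does not affect the asymptotic complexity.
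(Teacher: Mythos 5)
This lemma is not proved in the paper at all: it is imported as a black box from \cite{BMNSU25}, where HEG is observed to be equivalent to the hypergraph sinkless orientation problem and solved by their algorithm. So there is no in-paper argument to compare against; your proposal has to stand on its own, and as it stands it has a genuine gap. The entire difficulty of the cited result is the deterministic per-phase progress guarantee, and your write-up explicitly defers exactly that step (``the main obstacle I anticipate is precisely this deterministic rounding step''), offering only candidate directions (defective coloring, potential functions) without carrying any of them out. A randomized Chernoff argument plus ``derandomize somehow'' is not a proof of a deterministic $O(\log_{\delta/r} n)$ bound; this is the theorem, not a technicality.

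There is also a second, more structural problem with the phase scheme itself. You claim the restricted bipartite graph between $U_t$ and $E_t$ ``still inherits a comfortable Hall slack,'' but the greedy process you describe does not obviously preserve it: a vertex $v\in U_t$ with $\delta$ incident hyperedges can, over the course of earlier phases, see all of them grabbed by other vertices (each grab removes only one vertex from $U_t$, but the grabbed hyperedges need not be distributed evenly), leaving $v$ with no free incident hyperedge and making the instance infeasible for $v$. So the invariant you rely on for both feasibility and the $|U_{t+1}|\le |U_t|\cdot r/\delta$ shrinkage must be argued, not asserted, and the known deterministic sinkless-orientation algorithms avoid this by a more careful construction than round-robin proposals. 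Your global feasibility observation (Hall's condition with slack $\delta/r$, so a saturating matching exists) is correct and is the right starting point, but the algorithmic core of the lemma is missing.
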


\section{\texorpdfstring{Deterministic $\Delta$-Coloring dense graphs}{Deterministic Delta-Coloring dense graphs}}
\label{sec:deterministicDeltaColoring}
The goal of this section is to prove the following theorem.
\begin{restatable}{theorem}{thmDeltaColoring}
	\label{thm:DeltaColoring}
	There is a deterministic $\min\{\widetilde{O}(\log^{5/3}n),O(\Delta+\log n)\}$-round \LOCAL algorithm for $\Delta$-coloring dense graphs that do not contain a $\Delta+1$ clique.
\end{restatable}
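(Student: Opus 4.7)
The plan is to build on the almost clique decomposition (ACD) of \Cref{lem:acd} and realize the slack-triad strategy outlined in \Cref{sec:tecOverview}. Since the input graph is dense, the ACD contains no sparse vertices, so $V$ is partitioned into almost cliques $C_1,\ldots,C_t$ of size $(1\pm\eps)\Delta$, each of diameter two, and by \Cref{obs:outdegree} every $v\in C_i$ has at most $\eps\Delta$ neighbors outside its clique. Because $G$ contains no $\Delta+1$ clique, each $C_i$ has at least one outgoing edge, and in fact a Brooks-type counting shows that every clique has $\Omega(\Delta)$ outgoing edges. The high-level plan is: (a) identify a collection $\mathcal{T}$ of vertex-disjoint slack triads, one per clique, (b) same-color the slack pair vertices of $\mathcal{T}$ using a virtual conflict graph, and (c) extend this partial coloring to a proper $\Delta$-coloring of $G$ using the generated permanent slack.

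For step (a), I first compute a maximal matching $M$ of intra-clique edges using a deterministic $\widetilde{O}(\log^{5/3}n)$-round routine (or an $O(\Delta+\log^* n)$-round routine when $\Delta$ is small). As in the overview, a re-balancing phase then uses \Cref{theorem:HEG}: each clique $C$ acts as a hyperedge in an auxiliary hypergraph whose vertices are (halves of) other cliques, so that the rank $r=O(1)$ is much smaller than the minimum degree $\delta=\Omega(\Delta)$ coming from the many outgoing matching edges of $C$. The resulting grabbing assigns to each $C_i$ a short list of intra-clique matching edges that are not simultaneously claimed by a neighboring clique. Running a sinkless-orientation variant on the $2$-split of each clique then produces two independent edges $(u_1,u'_1),(u_2,u'_2)$ that $C_i$ can genuinely commit to. Combining each chosen edge with an external neighbor of one endpoint yields the triad $(u,v_{u_1},u_2)$ of \Cref{fig:slacktriad}, giving property (iii) (a triad covering every clique) and property (i) (disjointness of triads within a clique, and across cliques by the grabbing).

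For step (b), I build the virtual conflict graph $H$ whose vertices are the slack pairs and whose edges reflect the constraint that two same-colored pairs must not share a color (see \Cref{fig:slackTriadVirtual}). The key structural claim is that the construction of step (a) can be tuned so that $H$ has maximum degree at most $\Delta-1$; a proper $\Delta$-coloring (equivalently a $(\Delta+1)$-list-coloring on $H$) can then be computed in $\widetilde{O}(\log^{5/3}n)$ rounds by the state-of-the-art $(\Delta+1)$-coloring machinery, or in $O(\Delta+\log n)$ rounds by classical algorithms. Assigning the two endpoints of each pair the same color produces, for every slack vertex $u$, at least one unit of permanent slack. For step (c), I remove the colored pair vertices and apply the layered brute-force procedure of \cite{PS95}: every remaining component contains a vertex of effective degree $<\Delta$, and since each clique has diameter two and components of uncolored vertices have diameter $O(\log n)$ (using the slack-vertex as a root), the layered procedure finishes within the claimed bounds.

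The main obstacle is property (ii): a naive assignment can give a single slack pair vertex up to $2\Delta-2$ conflicting neighbors in $H$, because both endpoints of a pair can each be entangled with $\Delta-1$ other pairs. Bringing the degree of $H$ down to $\Delta-1$ requires a second sparsification of the matching edges that participate in triads, which I plan to implement as an additional call to \Cref{theorem:HEG} with carefully adjusted rank and minimum-degree parameters, or as a balanced orientation of an auxiliary $(\Delta-1)$-degenerate multigraph. The overall round complexity is dominated by the $(\Delta+1)$-type list-coloring steps and the hyperedge-grabbing calls, yielding $\widetilde{O}(\log^{5/3}n)$ using current deterministic network-decomposition technology and $O(\Delta+\log n)$ when $\Delta$ is small enough that the Panconesi--Rizzi style coloring is faster; everything else (the ACD, the sinkless-orientation step, the layered brute-force) fits within these budgets.
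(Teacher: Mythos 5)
Your sketch reproduces the high-level plan from the paper's overview, but the places where you write ``I plan to'' are exactly where the actual proof lives, and several of your concrete claims do not hold as stated. First, you assert that every clique receives a slack triad and that the conflict graph $H$ can be ``tuned'' to have maximum degree at most $\Delta-1$ via a second, unspecified call to \Cref{theorem:HEG} or a balanced orientation. The paper cannot and does not guarantee a triad for every almost clique: cliques are first split into \emph{hard} and \emph{easy} ones according to whether they intersect a constant-size loophole (\Cref{def:hardeasycliques}), and only hard cliques all of whose vertices have neighbors in other hard cliques get triads; the rest are handled by a separate ruling-set-plus-constant-depth-layering routine (\Cref{alg:loophole}). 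This case split is not cosmetic: the non-adjacency of your slack pair $(v_{u_1},u_2)$, and the fact that distinct vertices of a sub-clique propose to distinct matching edges (needed for the minimum-degree bound in the HEG instance), both rely on the structural properties of \emph{hard} cliques (\Cref{lem:hardcliqueprops}, in particular that no external vertex has two neighbors in a hard clique), which fail for general almost cliques. Your HEG instance is also set up backwards: in the paper the hypergraph's vertices are $28$ sub-cliques per hard clique and its hyperedges are the matching edges (each containing the sub-cliques requesting it), with rank at most $2\eps\Delta$ and minimum degree at least $\frac{1}{28}(\Delta-\eps\Delta)$ -- not rank $O(1)$. The degree bound on the virtual conflict graph is then obtained not by another HEG call but by a degree-splitting step (\Cref{cor:degsplit}) that caps each clique at $2$ outgoing and at most $\frac{1}{2}(\Delta-2\eps\Delta-1)$ incoming matching edges, followed by the explicit count in \Cref{lemma:triaddegree} giving maximum degree $\Delta-2$.

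Second, your step (c) quietly reintroduces the problem the whole construction is meant to avoid. You claim the uncolored components have diameter $O(\log n)$ and invoke a layered brute-force in the style of \cite{PS95}; you give no argument for that diameter bound, and even granting it, overlapping logarithmic-radius regions around different slack vertices require symmetry breaking, which is precisely the $O(\log n\cdot\log^* n)$ bottleneck of prior deterministic algorithms discussed in \Cref{sec:tecOverview}. The paper instead arranges matters so that \emph{constant} depth suffices: after same-coloring the slack pairs, the remaining hard-clique vertices are finished with two $(deg+1)$-list coloring instances (\Cref{lemma:step4}), and the easy cliques and loopholes with $25$ layers around a $6$-ruling set of loopholes (\Cref{lem:easycoloring}). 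To turn your proposal into a proof you would need to (i) define and analyze the hard/easy dichotomy, (ii) give the actual sparsification that bounds the conflict-graph degree, and (iii) replace the logarithmic-depth layering with a constant-depth one.
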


The definitions of the terms in the high-level overview in \Cref{alg:mainAlgorithm}, i.e., hard cliques, easy cliques, and loopholes,  follow in \Cref{sec:loopholesDefinitions}.
The bulk of the paper will be coloring the vertices in hard cliques in \Cref{sec:coloringHardCliques,sec:phase1,sec:phase2,sec:phase3,sec:phase4,sec:runtimehard,sec:phase4B}. The last step of coloring easy cliques and loopholes is detailed on in \Cref{sec:easyLoophole}. The short proof of \Cref{thm:DeltaColoring} combining all previous lemmas appears in \Cref{sec:proofMainTheorem}.
\begin{algorithm}[H]\caption{$\Delta$-coloring  dense graphs (high-level overview)}
	\label{alg:mainAlgorithm}
	\begin{algorithmic}[1]
		\STATE Compute an ACD (for $\eps=\epsvalue$) and form the ordered partition of the nodes.
		\STATE  Color vertices in hard cliques
		\STATE Color vertices in easy almost cliques and loopholes \label{step:coloreasycliques}
	\end{algorithmic}
\end{algorithm}
\vspace{-0.8cm}
\subsection{Definitions: Loopholes and Hard/Easy Almost-Cliques}\label{sec:loopholesDefinitions}
Any node in a graph with degree less than $\Delta$ can be greedily (in a centralized setting) colored with one of the $\Delta$ colors, regardless of how its neighbors have been colored. Thus, it forms in some sense a loophole for the $\Delta$-coloring problem. Another loophole is given by a non-clique four cycle \cite{erdos79choosability,vizing76vertex,GHKM21}.
A non-clique four cycle in a graph $G$ consists of four vertices $v_0,v_1,v_2,v_3$ that form a four cycle, i.e., $\{v_i, v_{i+1\mod 4}\}\in E(G)$ for $i=0,1,2,3$ and $G[\{v_0,v_1,v_2,v_3\}]$ does not form a clique. In fact, it is known that any non-clique even length cycle is a loophole. We summarize these loopholes in the following definition, see \Cref{fig:loophole} for an example.
\begin{definition}[Loophole] \label{def:loopholes}
	Consider a clique $C$ in a graph $G$. A \textit{loophole} is a subgraph which has one of the following shape:
	\begin{compactenum}
		\item A vertex in $C$ with a degree less than $\Delta$, or
		\item A non-clique cycle of even length
	\end{compactenum}
	A \emph{loophole vertex} is a vertex contained in some loophole.
\end{definition}
In the literature these loopholes also appear as \emph{degree choosable components}, or as $deg$-list colorable graphs.  A graph $G$ is  \emph{$deg$-list colorable} if any list coloring in which node the list of allowed vertices for each vertex $v\in V(G)$ has at least $deg(v)$ colors.
\begin{lemma}[\cite{erdos79choosability,vizing76vertex,GHKM21}] \label{lem:loophole}
	Non-clique cycles of even length are $deg$-list colorable.
\end{lemma}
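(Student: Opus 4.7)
The plan is to prove the classical degree-choosability of even cycles. Let the cycle be $v_0, v_1, \ldots, v_{2k-1}$ with $k \geq 2$, and let $L$ be any list assignment with $|L(v_i)| \geq 2 = \deg(v_i)$ for all $i$. The argument splits naturally into two cases depending on whether the list assignment is constant along the cycle.

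If every vertex has the same list $L(v_i) = \{a, b\}$, then because $2k$ is even, alternating $a, b, a, b, \ldots$ around the cycle gives a proper list coloring. Otherwise there exist adjacent vertices whose lists differ; after relabeling, I can assume $L(v_0) \neq L(v_1)$ and (swapping their roles if necessary) pick a color $c_0 \in L(v_0) \setminus L(v_1)$. I would assign $v_0 \leftarrow c_0$ and then list-color the remaining path $v_1 v_2 \cdots v_{2k-1}$ greedily, processing vertices from $v_{2k-1}$ backwards toward $v_1$. At $v_{2k-1}$ only $c_0$ is forbidden, leaving a valid choice since $|L(v_{2k-1})| \geq 2$; at each intermediate $v_i$ with $2 \leq i \leq 2k-2$ only the color of $v_{i+1}$ is forbidden. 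The last step, coloring $v_1$, is the crux: although $v_1$ is adjacent to both $v_0$ and $v_2$, the color $c_0$ of $v_0$ lies outside $L(v_1)$ by construction, so the only effective constraint on $v_1$ is the color of $v_2$, and a valid color remains in $L(v_1)$.

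The main subtlety is the choice of $c_0 \in L(v_0) \setminus L(v_1)$ rather than an arbitrary color of $v_0$: selecting a color shared by both lists could force $v_1$ in the final step to avoid two distinct forbidden colors from a list of possibly only two, breaking the greedy scheme. The statement as quoted concerns cycles viewed as subgraphs; if additional chords turn the induced subgraph into a larger biconnected block, that block is neither a clique nor an odd cycle, and the degree-choosability of the induced subgraph follows from the Erdős–Rubin–Taylor characterization of non-degree-choosable graphs as Gallai trees.
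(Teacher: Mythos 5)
The paper does not prove this lemma at all --- it is imported as a classical fact from Erd\H{o}s--Rubin--Taylor, Vizing, and \cite{GHKM21} --- so there is no in-paper argument to compare against; your proposal supplies the standard textbook proof, and it is correct. The two-case structure (identical lists: alternate two colors around the even cycle; differing lists on an edge: color $v_0$ with a color outside $L(v_1)$ and greedily peel the remaining path toward $v_1$) is exactly the classical argument, and your greedy order and the key observation that $c_0\notin L(v_1)$ neutralizes the second constraint on $v_1$ are both sound. One small imprecision: your dichotomy "all lists equal $\{a,b\}$" versus "some adjacent lists differ" is not exhaustive as written, since all lists could be equal but of size greater than two; this is harmless because one either invokes the standard reduction of truncating each list to exactly $\deg(v)$ colors, or notes that the alternating coloring works verbatim with any two colors from the common list. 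Your closing remark is also apt: since the paper's loopholes are non-clique cycles viewed inside $G$ (possibly with chords), the general case is most cleanly dispatched by the Gallai-tree characterization, which is precisely what the paper's citations are for.
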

For our usage, the lemma is useful, as we can compute any $\Delta$-coloring of $V\setminus L$ to a coloring of $V$ if $G[L]$ is a loophole.
In contrast to prior work based on such loopholes, we only consider loopholes of constant size which is crucial for obtaining a flat logarithmic runtime, see \Cref{sec:tecOverview}.

Due to \Cref{lem:loophole}, it will be easier to color almost cliques that intersect a loophole. The bulk of our algorithm deals with the remaining hard cliques, as defined next.
\begin{definition}[Easy Almost Cliques/Hard Cliques] \label{def:hardeasycliques}
	An almost clique is called a \textit{hard} if it does not contain any vertex belonging to a \textit{loophole} of at most $6$ vertices; otherwise, it is called \textit{easy}.
\end{definition}
Let $\hardcliques$ be the set of all hard cliques in $G$. Further, we refer $\vhard=\bigcup_{C' \in \mathcal{C}_{hard}} V(C')$ as the set of vertices that are in a hard clique and $\ehard$ as the edges between vertices in $\vhard$ whose endpoints lie in different cliques. We often omit the 'almost' for  \emph{hard almost cliques}.

\begin{restatable}{lemma}{lemmahardcliqueprops} \label{lem:hardcliqueprops}
	The following hold for each hard clique $C$:
	\begin{compactenum}
		\item $C$ forms a clique,
		\item Each vertex $v\in C$  has $e_C=\Delta - |C| + 1$  neighbors outside of $C$~,
		\item \label{lem:hardcliqueprops:notwons}There is no vertex $w\notin C$ with two neighbors in $C$~.
	\end{compactenum}
\end{restatable}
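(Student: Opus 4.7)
The plan is to prove all three items by contradiction, in each case exhibiting a loophole on at most six vertices whose existence contradicts $C$ being hard (\Cref{def:hardeasycliques,def:loopholes}). The corner case $\Delta < 63$ is vacuous: as noted just after \Cref{lem:acd}, dense graphs with $\Delta < 63$ consist of isolated cliques, and in such a component every vertex has degree $|C|-1 < \Delta$ unless $|C| = \Delta+1$, the latter being excluded by hypothesis; hence no hard clique exists. I focus on $\Delta \ge 63$ below.

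For (1), suppose $C$ is not a clique and pick non-adjacent $u, v \in C$. ACD property (ii) gives $|N(u) \cap C|,|N(v) \cap C| \ge (1-\eps)\Delta$, so together with $|C| \le (1+\eps)\Delta$ from (i), each of $u,v$ has at most $2\eps\Delta$ non-neighbors in $C$. A union bound then yields at least $|C| - 4\eps\Delta \ge (1 - 17\eps/4)\Delta$ common neighbors of $u$ and $v$ in $C$, which far exceeds $2$ for $\eps = \epsvalue$ and $\Delta \ge 63$. Picking any two such common neighbors $w_1, w_2 \in C$, the cycle $u - w_1 - v - w_2 - u$ is a $4$-cycle that is non-clique because $u \not\sim v$, i.e.\ a loophole on four vertices all lying in $C$, contradicting hardness.

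For (2), with $C$ now known to be a clique, every $v \in C$ has exactly $|C|-1$ neighbors inside $C$ and thus at most $\Delta - |C| + 1 = e_C$ neighbors outside. Were it strictly fewer, then $\deg(v) < \Delta$ and $\{v\}$ would itself be a loophole of size $1$ by \Cref{def:loopholes}, contradicting hardness. For (3), suppose some $w \notin C$ has two neighbors $u_1, u_2 \in C$. ACD (iii) gives $|N(w) \cap C| \le (1-\eps/2)\Delta$ while (i) gives $|C| \ge (1-\eps/4)\Delta$; since $1-\eps/4 > 1-\eps/2$, some $u_3 \in C$ is non-adjacent to $w$. Because $C$ is a clique, $u_1, u_2, u_3$ are pairwise adjacent, and hence $w - u_1 - u_3 - u_2 - w$ is a $4$-cycle, non-clique because $w \not\sim u_3$, yielding a loophole on four vertices three of which lie in $C$, again contradicting hardness.

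I expect the only subtlety to be the numerical bookkeeping: verifying that with $\eps = \epsvalue$ the ACD slack really leaves room for at least two common neighbors in (1) and at least one $C$-non-neighbor of $w$ in (3). Both reduce to the strict inequalities $(1 - 17\eps/4)\Delta > 2$ and $(1-\eps/4) > (1-\eps/2)$, which hold with significant room to spare for the relevant range of $\Delta$; the degenerate small-$\Delta$ regime has already been absorbed by the structural observation in the first paragraph.
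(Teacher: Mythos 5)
Your proof is correct and takes essentially the same approach as the paper: each item is established by exhibiting a constant-size loophole (a degree-deficient vertex or a non-clique $4$-cycle) inside a supposedly hard clique, with the ACD parameters supplying the needed common neighbors and the needed non-neighbor of $w$. Your $4$-cycles are in fact chosen slightly more economically than the paper's (you do not require the two common neighbors in item~(1) to be adjacent to each other, and in item~(3) you correctly route the cycle through $w$ itself), but the underlying argument is the same.
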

The proof of \Cref{lem:hardcliqueprops} is a direct consequence of the definition of hard cliques.  For example if Part~3 would not hold, one could identify a loophole in the clique as depicted in \Cref{fig:loophole}.

\subsection{Coloring hard cliques: High level overview and Pseudocode (Algorithm~\ref{alg:coloring})} \label{sec:coloringHardCliques}

\paragraph{High level overview (pseudocode in \Cref{alg:coloring}).}
Hard cliques, introduced in  \Cref{def:hardeasycliques}, are cliques that do not intersect any loophole and therefore lack access to an existing slack source. To address this we identify  a collection of so-called non-overlapping \emph{slack triads}, see \Cref{def:slacktriad} below. A slack triad for a hard clique $C$ consists of a vertex $u\in C$ and two non-adjacent neighbors $v\in C$ and $w\notin C$ of $u$. Once identified these nodes, we aim at same-coloring the \ \emph{slack pair} $v$ and $w$ (\myStepCref{alg:coloring:8}, \Cref{sec:phase4}), providing slack to the uncolored node $u$, effectively providing a toehold to the whole clique such that all nodes can be colored efficiently, see \myStepCref{alg:coloring:9} and \Cref{sec:phase4B}.
One challenge is to ensure that we can even effectively same-color the slack pairs. Consider the virtual graph consisting of one vertex for each such slack pair and an edge between two 'slack pair vertices' if there is an edge between any of their corresponding vertices in the underlying graph $G$. Same-coloring the vertices in each slack pair corresponds to $\Delta$-coloring this virtual graph. If $u$ and $w$ are poorly chosen the maximum degree of the virtual graph can be as large as $2\Delta-2$ and there is little hope in effectively coloring it with $\Delta$ colors. The bulk of this section (\myStepCref{alg:coloring:1} to \myStepCref{alg:coloring:7} dealt with in \Cref{sec:phase1,sec:phase2,sec:phase3}) aims at identifying these slack triads. See \Cref{sec:tecOverview} for intuition on these steps.  Phase~1 is is presented in \Cref{sec:phase1} and contains the initial maximal matching computation as well as the explained proposal algorithm to modify the matching such that conditions i)--iii) from \Cref{sec:tecOverview} are satisfied. The latter is the most involved part of our algorithm.
\medskip

Let $\mathcal{C}$ (\hardcliques) denote the set of all almost cliques (hard cliques). Define the subgraph
$$G' \coloneqq G\left[\{C \in\mathcal{C}_{hard} \mid \forall v \in V(C), N(v) \cap \vhard\neq \emptyset\}\right]$$
that consists of all hard almost cliques in $\mathcal{C}_{hard}$ where each vertex of the clique has at least one neighbor in a hard  clique. We begin with a high level overview on the algorithm. Thereafter we detail on each step and its properties in a separate section.

\begin{algorithm}
	\caption{Coloring vertices in hard cliques (high level overview)}
	\label{alg:coloring}
	\begin{algorithmic}[1]
		\Statex \hspace{-0.6cm}\textbf{Phase 1: Balanced Matching (\Cref{sec:phase1})}
		\STATE Compute Maximal Matching $F_1$ in $G'$ \label{alg:coloring:1}
		\STATE \label{alg:coloring:2}Compute a HEG on the hypergraph
		\STATE \label{alg:coloring:3}Rearrange edges in $F_1$ to vertex that grabbed the edge to obtain matching $F2$
		\STATE \label{alg:coloring:4}Orient edges in $F_2$ to the vertex that grabbed the edge.
		\Statex \hspace{-0.6cm}\textbf{Phase 2: Sparsifying the matching (\Cref{sec:phase2})}
		\STATE \label{alg:coloring:5}Apply the Degree Splitting to reduce incoming and outgoing edges per clique.
		\STATE \label{alg:coloring:6}Discard all outgoing edges per clique except two
		\Statex \hspace{-0.6cm}\textbf{Phase 3: Slack Triad Forming (\Cref{sec:phase3})}
		\STATE \label{alg:coloring:7}Define a slack triad from the two vertices with an outgoing edge per clique
		\Statex \hspace{-0.6cm}\textbf{Phase 4: Coloring (\Cref{sec:phase4,sec:phase4B})}
		\STATE Compute a coloring of the slack pairs in the graph induced by them. \label{alg:coloring:8} \label{TEST}
		\STATE \label{alg:coloring:9}Color remaining hard vertices with $O(1)$ rounds of $deg+1$-list coloring instances\label{step:coloring}
	\end{algorithmic}
\end{algorithm}

\subsection{Phase 1: Balanced Matching}\label{sec:phase1}
The first step of the coloring is to determine a matching in the graph $(\vhard,\ehard)$ that considers only edges of $E(G)$ whose endpoints lie in different hard cliques.

The main part of this stage is to rearrange and orient the matching $F_1$ such that each participating clique ends up with at least $28$ incident outgoing edges.   Let  $\hardcliqueshso\subseteq\hardcliques$ be the set of all hard cliques for which each vertex is adjacent to at least one vertex in another hard clique. Define the  following multihypergraph $H$ to apply \Cref{theorem:HEG} on it.

$$H=(V_H,E_H)~.$$
\textbf{The vertex set $V_H$.} The vertex set $V_H$ consists of $28$ virtual vertices for each hard clique $C\in\hardcliqueshso$ that are obtained by partitioning $C$ into sets $\hhc_{C}^1, \ldots, \hhc_{C}^{28}$. Each of these virtual \emph{sub cliques} forms one vertex of $H$.

\paragraph{Intuition for the proposal algorithm.} Each virtual sub clique sends out one request for each of its vertices to grab a close-by edge of $F_1$; different vertices of the sub clique send requests to different edges, as otherwise loopholes would exist. Then, the hypergraph $H$ has one hyperedge for each edge of $F_1$  consisting of the sub cliques requesting to grab that edge. In a solution to the HEG problem on $H$ sub clique exclusively grabs one edge of $F_1$. In a second step this relation is used to modify the matching $F_1$ and to orient the resulting edges such that each sub clique obtains one outgoing edge. This totals in $28$ outgoing edges for each hard clique in $\hardcliqueshso$.

\smallskip

\noindent\textbf{The edge set $E_H$.} Formally, describing the edge set of $H$ is more involved, in particular, as we require several technical definitions  to later formally define the matching $F_2$ stemming from a solution to HEG on $H$. To this end, define  $f:V(\hardcliqueshso)\rightarrow V(F_1)$:
\[f(v)=
	\begin{cases}
		v & \text{if $v$ has incident edge in $F_1$}                                            \\
		u & \text{otherwise, where $u$ is the vertex with minimum ID in $N_{out}(v)\cap\vhard$}
	\end{cases}
\]
The function $f(v)$ is well-defined because every vertex $v$ belongs to one of the participating cliques. This ensures that $v$ either has an incident edge in $F_1$ or a neighbor in $N_{out}(v) \cap \vhard$.
Note that if $f(v)=u$ with $u\neq v$, then $u$ has an incident edge in $F_1$ as $F_1$ is a maximal matching between hard vertices and both $v$ and $u$ are hard vertices. Let $\phi(v)$  be the unique edge of $F_1$ incident to $f(v)$. For a vertex $v\in V(\hardcliqueshso)$ let $Q_i(v)\in V_H$ denote the sub clique that contains $v$. The hypergraph $H$ has one hyperedge $f_e=\{Q_{i}(v) \mid v\in V, \phi(v)=e\}$ for each edge $e\in F_1$ and $f_e$ contains exactly those sub cliques that request to grab edge $e$.

\smallskip

The following lemma shows that all vertices of a sub clique request to grab distinct sets of edges. It is proven  by contradiction: If two vertices of a sub clique would request to grab the same edge, one can construct a loophole intersecting their hard clique.

\begin{restatable}{lemma}{lemmasetsize}\label{lem:TrueProposal}
	For each $C\in\hardcliqueshso, i\in [28]$ we have $|f(V(Q_{C}^i))|=|\phi(V(Q_{C}^i))|=|Q_{C}^i|$.
\end{restatable}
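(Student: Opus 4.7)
}
The statement is that both $f$ and $\phi$ are injective when restricted to a single sub clique $Q_C^i$. Since $\phi(v)$ is determined by $f(v)$ (it is the unique $F_1$-edge incident to $f(v)$), the two claims are closely linked, and the proof will be by case analysis on whether the vertices involved have incident $F_1$-edges. The main tool throughout will be \Cref{lem:hardcliqueprops}\ref{lem:hardcliqueprops:notwons}: no vertex outside $C$ has two neighbors in $C$. Violating this would either directly contradict the hard-clique property or produce a non-clique $4$-cycle (a loophole of size $4$) intersecting $C$, which is excluded since $C\in\hardcliqueshso\subseteq\hardcliques$.

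\emph{Injectivity of $f$ on $Q_C^i$.} Fix distinct $v_1,v_2\in Q_C^i$ and suppose $f(v_1)=f(v_2)$. If both $v_j$ have incident $F_1$-edges, then $f(v_j)=v_j$ and so $v_1=v_2$, contradiction. If exactly one does, say $f(v_1)=v_1$ but $f(v_2)=u\neq v_2$, then $v_1=u\in N_{out}(v_2)\cap \vhard$; but $v_1\in C$ and $u$ is by definition outside of $v_2$'s clique $C$, contradiction. If neither has an incident $F_1$-edge, then $f(v_1)=f(v_2)=u$ for some $u\notin C$ that is a common neighbor of $v_1$ and $v_2$, contradicting \Cref{lem:hardcliqueprops}\ref{lem:hardcliqueprops:notwons}.

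\emph{Injectivity of $\phi$ on $Q_C^i$.} Fix distinct $v_1,v_2\in Q_C^i$ and suppose $\phi(v_1)=\phi(v_2)=:e\in F_1$. Writing $e=\{a,b\}$, we may assume $f(v_1)=a$ and $f(v_2)=b$; by the previous paragraph $a\neq b$. If $f(v_j)=v_j$ for both $j$, then $e=\{v_1,v_2\}$ is an intra-clique edge, but $F_1\subseteq \ehard$ contains only inter-clique edges, contradiction. If $f(v_1)=v_1$ and $f(v_2)=u\neq v_2$, then $e=\{v_1,u\}$, so $u$ is a neighbor of both $v_1$ and $v_2$ while lying outside $C$, again contradicting \Cref{lem:hardcliqueprops}\ref{lem:hardcliqueprops:notwons}. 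Finally, suppose $f(v_j)=u_j\neq v_j$ for both $j$, so $u_1,u_2\notin C$ and $\{u_1,u_2\}=e\in F_1$. Then $v_1,u_1,u_2,v_2$ form a cycle of length $4$ (using the $C$-edge $\{v_1,v_2\}$ and the $F_1$-edge $\{u_1,u_2\}$, together with $\{v_j,u_j\}\in E$). This cycle cannot be a clique, since for instance $u_2$ being adjacent to $v_1$ would again give $u_2$ two neighbors in $C$; hence it is a non-clique $4$-cycle intersecting $C$, contradicting $C\in\hardcliques$.

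\emph{Identifying the main obstacle.} The only subtle case is the last one, where neither $v_1$ nor $v_2$ has an incident $F_1$-edge and they propose to the two distinct endpoints of the same matching edge. The naive ``two common neighbors in $C$'' argument does not immediately apply; instead, one must exhibit a $4$-cycle and verify it is non-clique using the same ``no vertex outside $C$ has two neighbors in $C$'' property. This is precisely where the bound $6$ on loophole size in \Cref{def:hardeasycliques} is exploited: a hard clique rules out loopholes of size $4$, which is exactly what the analysis needs.
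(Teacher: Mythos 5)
Your proof is correct and follows the same overall strategy as the paper (injectivity of $f$ via Part~3 of \Cref{lem:hardcliqueprops}, then injectivity of $\phi$ by case analysis on which of the two vertices carry an $F_1$-edge), but the decisive case is resolved differently. Where both $v_1,v_2$ propose through external vertices $u_1\neq u_2$ with $\{u_1,u_2\}\in F_1$, the paper augments the configuration with two auxiliary clique vertices $c_1,c_2$ (with $c_1$ chosen non-adjacent to $f(u)$) to exhibit a non-clique cycle on \emph{six} vertices; you instead observe that $v_1,u_1,u_2,v_2$ already form a $4$-cycle, and that it cannot be a clique because $u_2$ adjacent to $v_1$ would give $u_2$ two neighbors in $C$. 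Your construction is shorter, avoids the existence argument for $c_1,c_2$, and only needs hard cliques to exclude loopholes on four vertices rather than six; it is a clean simplification of the paper's argument. Your handling of the mixed case ($f(v_1)=v_1$, $f(v_2)=u\neq v_2$) and of the degenerate intra-clique-edge case is also more explicit than the paper's, which folds these into ``reduces to the first part'' without comment. No gaps.
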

Let $r_H=\max_{e\in F_1}|f_e|$ be the rank of the hypergraph $H$ and let $\delta_H=\min_{Q\in V_H}|\{f\in E_H \mid Q\in f\}|$ be the minimum degree of $H$. In order to apply \Cref{theorem:HEG} for computing an HEG of $H$, we show that the rank $r_H$ is smaller than the minimum degree $\delta_H$.
\begin{lemma} \label{lem:rankMinDegree}
	$\delta_H>1.1 r_H$ holds.
\end{lemma}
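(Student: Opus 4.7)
The plan is to separately upper bound the rank $r_H$ and lower bound the minimum degree $\delta_H$, and then verify that with the choices $\eps=\epsvalue$ and $28$ sub cliques per hard clique the ratio $\delta_H/r_H$ comfortably exceeds $1.1$.

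For the minimum degree, I will invoke \Cref{lem:TrueProposal}: it states that $\phi$ is injective on each sub clique $Q_C^i$, so $Q_C^i$ is incident to exactly $|Q_C^i|$ hyperedges of $H$. Partitioning $C$ evenly into $28$ sub cliques and using the ACD bound $|C| \geq (1-\eps/4)\Delta$ from \Cref{lem:acd} therefore gives $\delta_H \geq \lfloor (1-\eps/4)\Delta/28 \rfloor$, which is essentially $\Delta/28$.

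For the rank, I will fix an arbitrary edge $e = \{u_1,u_2\} \in F_1$ with $u_j \in C_j$ for distinct hard cliques $C_1,C_2$, and bound $|f_e|$. A vertex $w$ contributes a sub clique to $f_e$ iff $\phi(w)=e$, equivalently $f(w) \in \{u_1, u_2\}$. Unwinding the definition of $f$, such a $w$ is either $u_1$, $u_2$ themselves, or an $F_1$-unmatched hard vertex whose minimum-ID out-of-clique hard neighbor is $u_1$ or $u_2$; in the latter case necessarily $w \in N(u_j) \setminus C_j$ for some $j$. Combining \Cref{lem:hardcliqueprops}(2) with $|C_j| \geq (1-\eps/4)\Delta$, each $u_j$ has at most $e_{C_j} = \Delta - |C_j| + 1 \leq (\eps/4)\Delta + 1$ external neighbors, so the number of candidate $w$'s is at most $2 + 2\bigl((\eps/4)\Delta + 1\bigr) = (\eps/2)\Delta + 4$. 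By the injectivity part of \Cref{lem:TrueProposal} each sub clique contributes at most once to this count, so $r_H \leq (\eps/2)\Delta + 4$.

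Plugging $\eps = 1/63$ into both bounds yields
\[
\frac{\delta_H}{r_H} \;\geq\; \frac{\lfloor (1-1/252)\Delta/28\rfloor}{\Delta/126 + 4},
\]
which tends to $126/28 = 4.5$ as $\Delta$ grows and stays well above $1.1$ throughout the range of $\Delta$ in which $\hardcliqueshso$ is non-trivial (for $\Delta < 63$ any dense graph is a collection of isolated cliques, so $F_1 = \emptyset$ and $H$ is trivial). The main obstacle is the case analysis for $f$ in the rank bound---verifying that every sub clique contributing to $f_e$ really corresponds to a vertex in $N(u_j) \setminus C_j$, and using the injectivity of $\phi$ on sub cliques to prevent double-counting; after that the inequality reduces to a short arithmetic check with $\eps=1/63$ and $28$.
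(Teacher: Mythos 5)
Your overall strategy is the same as the paper's: lower-bound $\delta_H$ by the size of the sub cliques using the injectivity of $\phi$ from \Cref{lem:TrueProposal}, upper-bound $r_H$ by counting which vertices $w$ can have $f(w)\in\{u_1,u_2\}$, and finish with arithmetic at $\eps=1/63$. Your case analysis for $f$ is correct, and your rank bound is in fact \emph{tighter} than the paper's: the paper simply invokes \Cref{obs:outdegree} to get $r_H\le 2\eps\Delta$ (so its asymptotic ratio is a razor-thin $\tfrac{(1-\eps)/28}{2\eps}=\tfrac{62}{56}\approx 1.107$), whereas your use of \Cref{lem:hardcliqueprops}(2) together with $|C_j|\ge(1-\eps/4)\Delta$ gives $r_H\le(\eps/2)\Delta+4$ and an asymptotic ratio of about $4.5$. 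That extra slack is a real improvement in robustness of the argument.

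The gap is in the final step, which you defer as a ``short arithmetic check'' but never perform, and which fails as stated. The claim that $\lfloor(1-1/252)\Delta/28\rfloor/(\Delta/126+4)$ ``stays well above $1.1$'' for all $\Delta\ge 63$ is false: for $\Delta=100$ the numerator is $\lfloor 3.55\rfloor=3$ while the denominator is $100/126+4\approx 4.79$, so the ratio is below $1$; the inequality only clears $1.1$ once $\Delta$ is roughly $200$ or more. The culprits are the additive $+4$ in your rank bound and the floor in your degree bound, which dominate for moderate $\Delta$, so you would need a separate (sharper) count in that regime --- e.g., observing that when $|C_j|$ is close to $\Delta$ each endpoint of $e$ has only $\Delta-|C_j|+1$ external neighbors, which is then a small constant. (For what it is worth, the paper's own handling of this step is also shaky --- its displayed chain uses $\lfloor x\rfloor\ge x-1>x-\tfrac{1}{28}$, which is backwards --- but in your write-up the burden is on you to actually carry out the verification you postpone, and as stated it does not go through for all relevant $\Delta$.)
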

\begin{proof}
	First, we analyze the number of proposals received by an edge $e \in F_2$, where $e$ is represented by a hyperedge $f_e$ in $H$, corresponding to the rank $r$ of $H$. The edge $e$ receives proposals from its own endpoints as well as from (some) neighboring vertices that are adjacent to both endpoints of $e$. Each endpoint of $e$ is adjacent to at most $\eps\Delta$ vertices outside its own clique due to \Cref{obs:outdegree}. Consequently, the total number of proposals received by $e$ is bounded by $2\eps\Delta$. This also represents the maximum rank $r$ in $H$.

	The degree $\delta_H$ in $H$ is represented by the number of proposals each vertex sends in $H$. By definition each vertex of $H$ represent a sub clique $Q_C^i$ and each such sub clique sends exactly one proposal for each of its vertices. By \Cref{lem:TrueProposal} we know that each vertex proposes to a different edge. Therefore the number of send proposals per vertex in $H$ and $\delta_H$ is at least $\lfloor 1/28 (\Delta-\eps\Delta+1)\rfloor \geq 1/28 (\Delta-\eps\Delta+1) - 1$. Using $\eps= \epsvalue$ at $(*)$  we obtain
	\vspace{-0.2cm}
	\begin{align}
		\label{eqn:HEG}
		\delta_H\geq 1/28 (\Delta-\eps\Delta+1) - 1>1/28 (\Delta-\eps\Delta)\stackrel{(*)}{\geq} 2.2\eps\Delta\geq 1.1r_H~. & \qedhere
	\end{align}
\end{proof}

Intuitively, \Cref{lem:rankMinDegree} states that each subclique sends out more requests to grab edges than any edge in the initial maximal matching receives. As a result, on average, each subclique can claim at least one edge exclusively. Furthermore, \Cref{theorem:HEG} guarantees that such an assignment can be computed efficiently in $O(\log_{r_H/\delta_H} n)=O(\log n)$ rounds.

\paragraph{Computing $F_2$ from a solution to HEG on $H$.}
Given the matching $F_1$ obtained in \myStepCref{alg:coloring:1} of  \Cref{alg:coloring}, along with the result of the subsequent HEG instance, we now demonstrate the method for constructing a matching $F_2$ together with an orientation of its edges ensuring that each clique is incident to at least 28 outgoing edges from $F_2$.

We obtain the set $F_2$ by starting with the empty set and by adding edges to $F_2$ via performing the following process for each edge  $e\in F_1$ and corresponding hyperedge $f_e\in E_H$. Let $\phi''(f_e)\in V_H\cup \{\bot\}$ be the sub clique that grabbed the hyperedge $f_e$ based on the result of the HEG on $H$ obtained via \Cref{theorem:HEG}. If $\phi''(f_e)\neq \bot$. let $v_e=\phi'(f_e)\in V(\hardcliqueshso)$ be the unique vertex of the sub clique that sent a request to edge $e$.
For each edge $e\in F_1$ we add the edge $\{v_e, f(v_e)\}$ to $F_2$ and orient it to  $f(v_e)$. See \Cref{fig:flipping} for an illustration of this process.

\begin{figure}
	\centering
	\begin{subfigure}{.48\textwidth}\label{fig:flipping:before}
		\centering
		\scalebox{0.55}{\begin{tikzpicture}
    \clip (-6.5,2) rectangle + (13,-9);
    \def\xO{0}
    \def\yO{0}
    \def\R{3} 
    \def\Ri{2.4}
    \def\rA{0.35}
    \def\N{8} 

    \begin{scope}[shift={(-3.5,0)}]
        \foreach \i in {15,345,285,315} {
            \pgfmathsetmacro{\x}{\Ri*cos(\i)}
            \pgfmathsetmacro{\y}{\Ri*sin(\i)}
            \coordinate (P\i) at (\x,\y);
        }
    \end{scope}

    \begin{scope}[shift={(3.5,0)}]
        \foreach \i in {165,195,225,255} {
            \pgfmathsetmacro{\x}{\Ri*cos(\i)}
            \pgfmathsetmacro{\y}{\Ri*sin(\i)}
            \coordinate (Q\i) at (\x,\y);
        }
    \end{scope}

    \begin{scope}[shift={(0,-{(7*0.86})}]
        \foreach \i in {45,75,105,135} {
            \pgfmathsetmacro{\x}{\Ri*cos(\i)}
            \pgfmathsetmacro{\y}{\Ri*sin(\i)}
            \coordinate (R\i) at (\x,\y);
        }
    \end{scope}

    \draw[color=red, line width = 2px] (P285) -- (R135);
    \draw[] (R75) -- (Q225);
    \draw[color=red, line width = 2px] (P345) -- (Q225);

    \draw[color = blue,>-Stealth,shorten >=0.28cm + \pgflinewidth, line width = 2px] (R75) to [out=105,in=200] (Q225);
    \draw[color = blue,>-Stealth,shorten >=0.28cm + \pgflinewidth, line width = 2px] (R135) to [out=55,in=350] (P285);
    
    \begin{scope}[shift={(-3.5,0)}]
        \draw[thick, gray] (\xO,\yO) circle(\R cm);
        \draw[dashed] (0,0) -- ({\R * cos(330)},{\R * sin(330)});
        \draw[dashed] (0,0) -- ({\R * cos(30)},{\R * sin(30)});
        \draw[dashed] (0,0) -- ({\R * cos(270)},{\R * sin(270)});
    \end{scope}
    
    \begin{scope}[shift={(3.5,0)}]
        \draw[thick, gray] (\xO,\yO) circle(\R cm);
        \draw[dashed] (0,0) -- ({\R * cos(150)},{\R * sin(150)});
        \draw[dashed] (0,0) -- ({\R * cos(210)},{\R * sin(210)});
        \draw[dashed] (0,0) -- ({\R * cos(270)},{\R * sin(270)});
    \end{scope}
    
    \begin{scope}[shift={(0,-{(7*0.86})}]
        \draw[thick, gray] (\xO,\yO) circle(\R cm);
        \draw[dashed] (0,0) -- ({\R * cos(30)},{\R * sin(30)});
        \draw[dashed] (0,0) -- ({\R * cos(90)},{\R * sin(90)});
        \draw[dashed] (0,0) -- ({\R * cos(150)},{\R * sin(150)});
        
    \end{scope}
    
    \foreach \i in {15,345,285,315} {
        \draw[fill=white] (P\i) circle(\rA cm);
    }
    \foreach \i in {165,195,225,255} {
        \draw[fill=white] (Q\i) circle(\rA cm);
    }
    \foreach \i in {45,75,105,135} {
        \draw[fill=white] (R\i) circle(\rA cm);
    }
    \node[] at (-1.5,-2.95) {\color{blue}\textbf{?}};
    \node[] at (0.4,-2.5) {\color{blue}\textbf{?}};

    \node[] at (R135) {$u$};
    \node[] at (R75) {$v$};
    \node[] at (P285) {$w$};
\end{tikzpicture}}
		\caption{The red edges indicate a partial maximal matching on a section of a graph that result after \myStepCref{alg:coloring:1}. The proposals of the vertices $u$ and $v$ are indicated by the blue arrows.}
	\end{subfigure}
	\begin{subfigure}{.48\textwidth}\label{fig:flipping:after}
		\centering
		\scalebox{0.55}{\begin{tikzpicture}
    \clip (-6.5,2) rectangle + (13,-9);
    \def\xO{0}
    \def\yO{0}
    \def\R{3} 
    \def\Ri{2.4}
    \def\rA{0.35}
    \def\N{8} 

    \begin{scope}[shift={(-3.5,0)}]
        \foreach \i in {15,345,285,315} {
            \pgfmathsetmacro{\x}{\Ri*cos(\i)}
            \pgfmathsetmacro{\y}{\Ri*sin(\i)}
            \coordinate (P\i) at (\x,\y);
        }
    \end{scope}

    \begin{scope}[shift={(3.5,0)}]
        \foreach \i in {165,195,225,255} {
            \pgfmathsetmacro{\x}{\Ri*cos(\i)}
            \pgfmathsetmacro{\y}{\Ri*sin(\i)}
            \coordinate (Q\i) at (\x,\y);
        }
    \end{scope}

    \begin{scope}[shift={(0,-{(7*0.86})}]
        \foreach \i in {45,75,105,135} {
            \pgfmathsetmacro{\x}{\Ri*cos(\i)}
            \pgfmathsetmacro{\y}{\Ri*sin(\i)}
            \coordinate (R\i) at (\x,\y);
        }
    \end{scope}

    \draw[>-Stealth,color=black!60!green, line width = 2px,shorten >=0.28cm + \pgflinewidth](R135) -- (P285);
    \draw[>-Stealth,color=black!60!green, line width = 2px,shorten >=0.28cm + \pgflinewidth] (R75) -- (Q225);
    \draw[] (P345) -- (Q225);

    \begin{scope}[shift={(-3.5,0)}]
        \draw[thick, gray] (\xO,\yO) circle(\R cm);
        \draw[dashed] (0,0) -- ({\R * cos(330)},{\R * sin(330)});
        \draw[dashed] (0,0) -- ({\R * cos(30)},{\R * sin(30)});
        \draw[dashed] (0,0) -- ({\R * cos(270)},{\R * sin(270)});
    \end{scope}
    
    \begin{scope}[shift={(3.5,0)}]
        \draw[thick, gray] (\xO,\yO) circle(\R cm);
        \draw[dashed] (0,0) -- ({\R * cos(150)},{\R * sin(150)});
        \draw[dashed] (0,0) -- ({\R * cos(210)},{\R * sin(210)});
        \draw[dashed] (0,0) -- ({\R * cos(270)},{\R * sin(270)});
    \end{scope}
    
    \begin{scope}[shift={(0,-{(7*0.86})}]
        \draw[thick, gray] (\xO,\yO) circle(\R cm);
        \draw[dashed] (0,0) -- ({\R * cos(30)},{\R * sin(30)});
        \draw[dashed] (0,0) -- ({\R * cos(90)},{\R * sin(90)});
        \draw[dashed] (0,0) -- ({\R * cos(150)},{\R * sin(150)});
    \end{scope}
    
    \foreach \i in {15,345,285,315} {
        \draw[fill=white] (P\i) circle(\rA cm);

    }
    \foreach \i in {165,195,225,255} {
        \draw[fill=white] (Q\i) circle(\rA cm);

    }
    \foreach \i in {45,75,105,135} {
        \draw[fill=white] (R\i) circle(\rA cm);

    }

    \node[] at (R135) {$u$};
    \node[] at (R75) {$v$};
    \node[] at (P285) {$w$};
\end{tikzpicture}}
		\caption{The green oriented edges show the partial matching after the HEG procedure in \myStepCref{alg:coloring:2}.\\$\ $\\}
	\end{subfigure}
	\caption{This example illustrates the intuition behind the HEG procedure in \Cref{alg:coloring}, applied to three cliques with their respective partitions. It demonstrates the process of edge flipping triggered by the HEG. Initially, vertices request to grab an edge, as indicated by the blue arrows based on the function $f$. This request can either target an adjacent matching edge (e.g., vertex $v$) or a matching edge adjacent to a neighbor (e.g., vertex $u$). As the algorithm progresses, the vertices $(u, v, w)$ form the slack triad for the bottom clique assuming $u$ becomes the slack vertex.}
	\label{fig:flipping}
\end{figure}
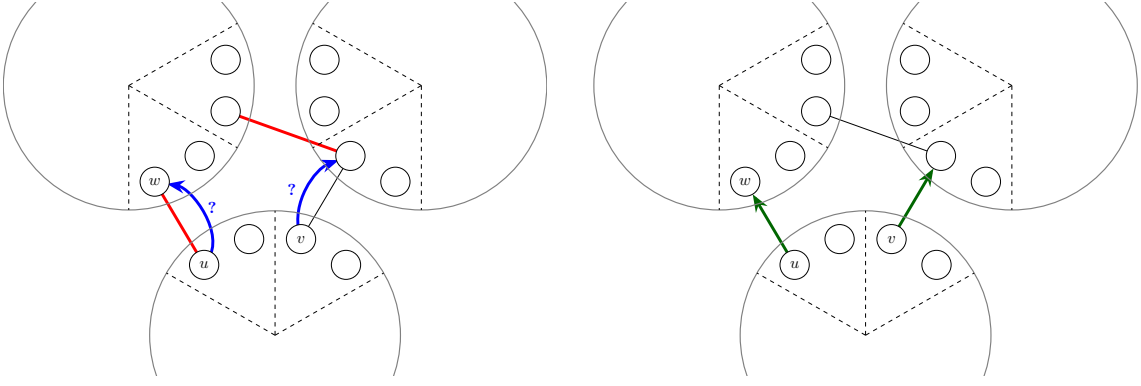
The next lemma states the properties of the resulting matching $F_2$.
\begin{lemma}[Balanced Matching] \label{lemma:twoAdjacentEdges}
	$F_2$ is a matching whose edges are directed and for each hard clique $C$ at least one of the following holds:
	\begin{compactitem}
		\item $\typeI$: $C$ has at least $28$ outgoing edges in $F_2$, or
		\item $\typeII$: $C$ has an adjacent easy AC.
	\end{compactitem}

\end{lemma}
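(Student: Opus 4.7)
I would prove the three claims in the statement in order: orientation (immediate from the construction), the matching property, and the Type~I/II dichotomy for every hard clique. Orientation is explicit from the way $F_2$ is built, so I focus on the remaining two.

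For the matching property, my approach is to exploit that every edge of $F_2$ corresponds bijectively to a successful HEG grab: some sub clique $Q$ grabs a hyperedge $f_e$ (with $e\in F_1$), producing an $F_2$-edge whose endpoints are the unique proposer $v_e\in Q$ and some endpoint of $e$ lying outside $Q$'s clique. I would then argue that any vertex $v$ is incident to at most one edge of $F_2$ by examining the two roles it can play. As proposer (i.e.\ $v=v_e$), $v$ appears in at most one $F_2$-edge because $v$'s sub clique grabs at most one hyperedge by \Cref{theorem:HEG}. As an ``outside endpoint'' (i.e.\ $v$ lies on the grabbed $F_1$-edge without being its proposer), $v$ appears in at most one $F_2$-edge because $F_1$ is a matching, so at most one $F_1$-edge is incident to $v$, and HEG assigns that edge to at most one sub clique. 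The delicate case is $v$ playing both roles simultaneously. Then $v$ must have its own $F_1$-edge $e^*$, which by the definition of $f$ and $\phi$ forces $\phi(v)=e^*$; being a proposer then means $v$'s sub clique grabbed $\phi(v)=e^*$, while being an outside endpoint means some \emph{other} sub clique also grabbed $e^*$, contradicting the uniqueness guarantee of HEG.

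For the dichotomy, fix a hard clique $C$. If $C\in\hardcliqueshso$, then $C$ is partitioned into $28$ sub cliques that are vertices of $H$; by \Cref{lem:rankMinDegree} combined with \Cref{theorem:HEG}, each of these sub cliques grabs a distinct hyperedge. Each grab yields an outgoing $F_2$-edge at $C$: the proposer $v_e$ lies in $C$, and its counterpart is either the matching partner of $v_e$ in $F_1$ (a hard vertex outside $C$) or the vertex $f(v_e)\in N_{out}(v_e)\cap\vhard$ (again outside $C$). By \Cref{lem:TrueProposal} the $28$ proposers are distinct vertices of $C$, so these $28$ outgoing edges are distinct, establishing Type~I. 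If instead $C\in\hardcliques\setminus\hardcliqueshso$, there is a vertex $v\in C$ with no hard neighbor outside $C$. By \Cref{lem:hardcliqueprops}, $v$ has at least $\Delta-|C|+1\ge 1$ neighbors outside $C$; since the graph is dense, every vertex is either in a hard or an easy clique, so at least one such outside neighbor lies in an easy AC, establishing Type~II.

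The main technical difficulty is the matching property, specifically the interaction of the proposer and outside-endpoint roles at a single vertex. The key observation that unlocks it is that a vertex with its own $F_1$-edge is forced by the definitions of $f$ and $\phi$ to propose exactly to that edge, after which a single appeal to the uniqueness of HEG assignments delivers the contradiction. The dichotomy argument is then essentially bookkeeping on top of \Cref{lem:rankMinDegree}, \Cref{theorem:HEG}, and the definition of $\hardcliqueshso$.
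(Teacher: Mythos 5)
Your proof is correct and takes essentially the same route as the paper's: the matching property follows from the uniqueness of HEG grabs combined with $F_1$ being a matching (the paper organizes the identical argument as a case distinction on whether $v$ grabs an edge and whether it is matched in $F_1$, rather than by the two roles you use), and the dichotomy follows from the $28$ sub cliques of a clique in $\hardcliqueshso$ each grabbing a distinct edge, with the remaining hard cliques handled by the definition of $\hardcliqueshso$ together with density. Your handling of the double-role vertex and of the Type~II case is somewhat more explicit than the paper's terse treatment, but the underlying ideas coincide.
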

\begin{proof}
	During the construction of $F_2$, each edge is assigned a direction based on the result of the hyperedge grabbing process, which uniquely determines the vertex grabbing that edge; recall that different vertices in a sub clique request from different edges.

	First we prove that $F_2$ is a matching.

	Consider some vertex $v$. \textit{Case 1 ($v$ does not grab any edge in $F_2$):} By the definition of $F_2$, the only way that $v$ is the endpoint of an edge in $F_2$ not grabbed by $v$ is if there was some edge $e\in F_1$ incident to $v$, edge $e$  is grabbed  by some other vertex $y\neq v$, modifying the edge to be $\{v,y\}$. But as $e$ can only be grabbed by a single vertex and $F_1$ is a matching, there can only be one edge of this type.   \textit{Case 2A ($v$ grabs one incident edge, $v$ has incident edge $e\in F_1$):}  By the definition of $f(v)$ node $v$ proposed to grab edge $e$, and hence $e$ is also the edge grabbed by $v$. As $F_1$ is a matching and edge $e$ is not grabbed by any vertex besides $v$, there cannot be any other vertex $y$ grabbing an edge incident to $v$.  \textit{Case 2B ($v$ grabs one incident edge, $v$ has no incident edge in $F_1$):} If $v$  has no incident edge in $F_1$, it can also not become adjacent to an edge in $F_2$ that $v$ does not grab.
	Lastly observe that there are no further cases as $v$ grabs at most one edge by definition.

	Consider a hard clique $C\in \hardcliqueshso$ and its  28 subcliques $\hhc_{C}^1, \ldots, \hhc_{C}^{28}$. Each  grabs at least one edge during the HEG process.  Consequently, the clique $C$ must be of $\typeI$. By definition of $\hardcliqueshso$ all other hard cliques are of $\typeII$.
\end{proof}

\subsection{Phase 2: Sparsifying the Core Matching} \label{sec:phase2}

This sparsification step aims to limit the number of each hard clique’s incident edges, essential for coloring slack pairs in \myStepCref{alg:coloring:8},  see \Cref{sec:phase4}.

We split each hard clique $C$ into two subsets using a more structured approach than in the previous step. Let $Q_C^+ \subseteq C$ denote the set of vertices that have an outgoing edge in $F_2$. The remaining vertices, $Q_C^- = C \setminus Q_C^+$, consist of those incident to an incoming edge in $F_2$ or are not incident to an edge in $F_2$ at all. Let $\mathcal{Q}^+=\{Q_C^+ \mid C\in \hardcliques\}$ and $\mathcal{Q}^-=\{Q_C^- \mid C\in \hardcliques\}$ be the sets of all such sub cliques. Construct a virtual graph $G_Q$ with vertex set  $\mathcal{Q}^+ \cup \mathcal{Q}^-$. Two vertices in $G_Q$ are connected by an edge if there exists an edge in $F_2$ with endpoints in their corresponding sub-cliques. We apply the degree-splitting algorithm from \Cref{thm:degreeSplitting} to $G_Q$ to limit the number of edges incident to each hard clique, retaining $2$ outgoing edges and imposing an upper bound on incoming edges. The outcome  is summarized in the following lemma:

\begin{lemma}[Low-Degree Balanced Matching] \label{lemma:advancedMatching} 
	Given the graph $G_H$ and the Matching $F_2$ with the properties from Lemma \ref{lemma:twoAdjacentEdges}, the second phase of Algorithm \ref{alg:coloring} computes a oriented matching $F_3\subseteq F_2$  with the following property: each hard clique $C$ has less than $\frac{1}{2}(\Delta-2\eps\Delta-1)$ incoming edges in $F_3$ and additionally one of the following holds:
	\begin{compactitem}
		\item $\typeIplus$: $C$ is adjacent to exactly $2$ outgoing edges in $F_3$, or
		\item $\typeII$: $C$ has an adjacent easy clique.
	\end{compactitem}
\end{lemma}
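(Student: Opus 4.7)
The plan is to construct the virtual graph $G_Q$ explicitly, iterate a standard edge-splitting routine a constant number of times on $G_Q$, and finally prune the outgoing edges of each Type~I clique to exactly two.

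First, bound the initial degrees in $G_Q$. By construction, the degree of $Q_C^-$ in $G_Q$ equals the number of incoming $F_2$-edges incident to vertices of $C$, which is at most $|Q_C^-|\leq |C| \leq (1+\eps)\Delta$; likewise the degree of $Q_C^+$ is at most $(1+\eps)\Delta$. On the other hand, \Cref{lemma:twoAdjacentEdges} guarantees that every Type~I clique has at least $28$ outgoing $F_2$-edges, originating from $28$ distinct vertices of $C$ (one per sub-clique that grabbed an edge via HEG), so the degree of $Q_C^+$ in $G_Q$ is at least $28$.

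Second, apply the edge-splitting routine of \Cref{thm:degreeSplitting} to $G_Q$ for a constant number $k$ of iterations, each time retaining only one of the two resulting classes. Each iteration partitions edges so that every vertex ends up with at most $\lceil d/2\rceil + O(1)$ and at least $\lfloor d/2\rfloor - O(1)$ edges in each class; iterating $k$ times yields, for every vertex, degree at most $(1+\eps)\Delta/2^k + O(1)$, and for every Type~I clique at least $28/2^k - O(1)$ remaining outgoing edges. For $\eps=\epsvalue$, the choice $k=2$ makes the first quantity strictly less than $\tfrac{1}{2}(\Delta - 2\eps\Delta - 1)$ whenever $\Delta$ exceeds a fixed constant, and leaves each Type~I clique with at least $2$ outgoing edges. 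The regime of small constant $\Delta$ is handled separately because any dense graph with $\Delta<63$ consists of isolated cliques, in which case the statement is vacuous.

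Third, for every Type~I clique keep exactly two of the remaining outgoing edges and discard the rest to obtain $F_3$, upgrading the clique to Type~I$^+$. Type~II cliques are left untouched and retain their adjacent easy clique. Since deleting edges only decreases incoming degrees at neighbors, the bound $\tfrac{1}{2}(\Delta - 2\eps\Delta - 1)$ on incoming edges remains valid at every hard clique.

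The main obstacle is making the two quantitative bounds compatible: the constant $28$ fixed in Phase~1 must be large enough that a constant number of halvings simultaneously pushes the incoming degree below $\tfrac{1}{2}(\Delta-2\eps\Delta-1)$ and still leaves $\geq 2$ outgoing edges per Type~I clique. The values $28$ and $\eps=\epsvalue$ chosen in the paper are calibrated precisely so that $k=2$ rounds of splitting suffice.
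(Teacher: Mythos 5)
Your proposal is correct and follows essentially the same route as the paper: apply the degree-splitting routine (the paper invokes \Cref{cor:degsplit} with $\eps'=1/100$ and $i=2$, i.e.\ two halvings keeping one of four parts), verify that each $Q_C^+$ retains at least $28/4-\eps'\cdot 28-4\geq 2$ outgoing edges while each $Q_C^-$ drops below $\tfrac{1}{2}(\Delta-2\eps\Delta-1)$, then discard all but two outgoing edges per \typeI clique, handling small $\Delta$ by noting the graph is then a union of isolated cliques. The only imprecision is writing the per-iteration discrepancy as $O(1)$ rather than $\eps' d(v)+O(1)$, but with $\eps'$ a small constant this does not affect the final calculation.
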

\begin{proof}
	We apply \Cref{cor:degsplit} with $\eps' = 1/100$, $i = 2$ on the graph $G_Q$, and partition the edges in $G_Q$ into $2^i = 2^2 = 4$ parts and we only keep the edges of $F_2$ that are in the first part. Let $F3 \subseteq F_2$ denote the resulting matching of edges, which is as subset of $F_2$ is still a matching. We also keep the initial orientation of edges. We obtain that each
	\begin{align*}
		\deg_{F3}(\mathcal{Q}^+) & \geq (\deg_{F_2}(\mathcal{Q}^+)/4-\eps'\deg_{F_2}(\mathcal{Q}^+) - 4 \geq  2                      \\
		\deg_{F3}(\mathcal{Q}^-) & \leq (\deg_{F_2}(\mathcal{Q}^-)/4+\eps'\deg_{F_2}(\mathcal{Q}^-) + 4 \leq  \Delta/4+\eps'\Delta+4 \\
		                         & \leq  \Delta/4+1\Delta/100+\Delta/5
		\leq \frac{1}{2}(\Delta-2\eps\Delta-1),
	\end{align*}
	where the first sequence of inequality holds due to $\deg_{F_2}(\mathcal{Q}^+)\geq 28$ and the last sequence of inequality holds due to $\Delta\geq\deltavalue$ and $\eps=\epsvalue$. Note that we may trivially assume $\Delta\geq 28$, as otherwise the graph consists of isolated cliques only.
\end{proof}
\subsection{Phase 3: Forming Slack Triads}\label{sec:phase3}

We use the matching $F_3$ to define a slack triad for each hard clique, thereby providing it with a source for permanent slack. Also see \Cref{fig:slacktriad}.
\begin{definition}[Slack Triad] \label{def:slacktriad}
	An ordered triple of vertices $(u,v,w)$ is a \emph{slack triad} if $v,w \in N(u)$ and $v$ and $w$ are non-adjacent. In this case we call $u$ the \emph{slack vertex} of the slack triad and $v$, $w$ form the \emph{slack pair}. We also refer $v$ and $w$ as \emph{slack pair vertices}.
\end{definition}
Given the matching $F_3$, for each hard clique $C$, we select an arbitrary tail of one of its outgoing edges, denoted as $e_1$ in $F_3$, and set it as the slack node for the corresponding slack triad of $C$. The slack triad is then completed by including the head of $e_1$ and the tail of the other outgoing edge of $C$ in $F_3$ as the slack pair vertices. Note that the slack pair vertices are indeed non-adjacent as required since otherwise there would be another vertex in $C$ that would define a loophole with the vertices of the slack triad.
\begin{lemma}[Forming slack triads] \label{lemma:step3} 
	Given the Matching $F_3$ with the properties of \Cref{lemma:advancedMatching}, in $O(1)$ rounds one can compute a collection $\mathcal{S}$ of slack triads with the following properties:
	\begin{compactenum}[\itshape i)\upshape]
		\item Each clique of $\typeIplus$ from \Cref{lemma:advancedMatching} contains a slack node of a slack triad of $\mathcal{S}$.
		\item \label{prop2}The slack triads in $\mathcal{S}$ are vertex disjoint.
		\item \label{prop3}Each clique contains at most $\frac{1}{2}(\Delta-2\eps\Delta-1)+1$
		slack pair vertices.
	\end{compactenum}
\end{lemma}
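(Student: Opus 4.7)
The plan is to build $\mathcal{S}$ by treating each \typeIplus clique independently and to verify the three properties afterwards. For each such clique $C$, Lemma~\ref{lemma:advancedMatching} supplies exactly two outgoing $F_3$-edges $e_1=(u_1,w_1)$ and $e_2=(u_2,w_2)$ with $u_j\in V(C)$ and $w_j\notin V(C)$. Pick $e_1$ arbitrarily and add the ordered triple $(u_1,u_2,w_1)$ to $\mathcal{S}$; the entire construction is completed in $O(1)$ rounds because every clique only inspects its own incident $F_3$-edges. Before verifying i)--iii) one must check that $(u_1,u_2,w_1)$ is a valid slack triad in the sense of Definition~\ref{def:slacktriad}: $u_2$ and $w_1$ lie in $N(u_1)$ by construction, and the non-adjacency of $u_2$ and $w_1$ is forced by Part~\ref{lem:hardcliqueprops:notwons} of Lemma~\ref{lem:hardcliqueprops}, since otherwise $w_1\notin V(C)$ would have the two neighbors $u_1$ and $u_2$ inside the hard clique $C$.

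Property~i) is immediate from the construction since the slack vertex $u_1$ of $C$'s triad lies in $C$. For property~iii), I would fix a hard clique $D$ and classify the slack pair vertices contained in $V(D)$ by origin. From $D$'s own triad (if $D$ is \typeIplus) we get exactly one vertex, namely $u_2\in V(D)$. Every other slack pair vertex contained in $V(D)$ must be of the form $w_1'$ for some triad of a distinct clique $C'$, which makes the corresponding $e_1'$ an incoming $F_3$-edge of $D$; Lemma~\ref{lemma:advancedMatching} bounds the number of such incoming edges by strictly less than $\frac{1}{2}(\Delta-2\eps\Delta-1)$, so the total is at most $\frac{1}{2}(\Delta-2\eps\Delta-1)+1$, as claimed.

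The main obstacle is property~ii), which I would handle by a short case distinction based on two ingredients: every vertex lies in exactly one almost clique (the ACD is a partition), and $F_3$ is a matching of oriented edges whose tail and head are distinct. Assume some vertex $x$ appears in two triads stemming from distinct \typeIplus cliques $C\ne C'$. If $x$ plays a tail-role ($x\in\{u_1,u_2\}$) in the triad of $C$ and also a tail-role in the triad of $C'$, then $x\in V(C)\cap V(C')=\emptyset$, a contradiction. If $x$ plays a tail-role in one triad and a head-role ($x=w_1'$) in the other, then $x$ is incident to two distinct edges of $F_3$ (one with $x$ as tail, the other with $x$ as head), contradicting the matching property; the subtle sub-case in which these edges coincide is ruled out because an oriented edge cannot have equal tail and head. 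Finally, if $x$ plays the head-role in both triads, the matching property again forces $e_1=e_1'$, and then the unique tail of this shared edge must equal both $u_1$ and $u_1'$, placing it in $V(C)\cap V(C')=\emptyset$. Exhausting all cases establishes vertex-disjointness and completes the proof plan.
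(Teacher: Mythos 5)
Your proposal is correct and follows essentially the same construction and argument as the paper: you pick the tail and head of one outgoing $F_3$-edge plus the tail of the other, use Part~3 of \Cref{lem:hardcliqueprops} for non-adjacency of the slack pair, the matching property of $F_3$ together with disjointness of almost cliques for vertex-disjointness, and the incoming-edge bound of \Cref{lemma:advancedMatching} for property~iii). Your case analysis for property~ii) is in fact somewhat more explicit than the paper's one-line justification.
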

\begin{proof}
	The collection of all slack triads from $\typeIplus$ cliques, constructed as explained above, n forms the set $\mathcal{S}$. Identifying the slack triads can clearly done in $O(1)$ time.

	\textbf{Property i):} We need to prove that the process indeed forms slack triads, i.e., that the two slack pair vertices are non-adjacent. Since one vertex in the pair lies outside the clique and is already adjacent to the slack vertex, it cannot be adjacent to any other vertex in the clique of the slack vertex due to Part~\ref{lem:hardcliqueprops:notwons} in \Cref{lem:hardcliqueprops}.

	\textbf{Property ii):} Consider a slack triad $(u, v, w)$ associated with a clique $C$, and, without loss of generality, let $(u,w)$ and $(v,v')$ be the outgoing edges of $F_3$ used to construct the slack triad. As $F_3$ forms a matching, and both edges are outgoing from $C$, none of the four vertices is used in any other slack triad construction. In particular, all slack triads are disjoint.
	In the worst case, every vertex in the clique that is adjacent to an incoming edge is selected as a slack pair vertex by some other clique. By \Cref{lemma:advancedMatching}, the number of incoming edges per clique is at most $\frac{1}{2}(\Delta - 2\eps\Delta-1)$. Additionally, each clique contains at most one extra slack pair vertex stemming from the clique's own slack triad. This results in a total of at most $\frac{1}{2}(\Delta - 2\eps\Delta-1) + 1$ slack pair vertices per clique.
\end{proof}
\subsection{Phase 4A: Coloring Slack Pairs}\label{sec:phase4}
We start the coloring process with the slack pairs of the slack triads $\cS$. Since there is no edge between the two slack pair vertices one can assign both the same color providing the slack node one unit of slack without violating the coloring. Nevertheless, vertices of the slack triads can be adjacent and consequently we need to coordinate color choices between the different slack pairs. Define the following virtual graph capturing the conflict relation between slack pairs.
$$G_V=(V_V, E_V)$$
where $V_V$ contains one vertex for each slack pair in $\cS$ and there exists an edge in $E_V$ between two slack pairs if there is an edge in the original graph between any of the slack pairs' vertices.  See \Cref{fig:slackTriadVirtual} for an illustration. \Cref{lemma:triaddegree} proves that this virtual graph instance can indeed be colored via a  $\mathsf{MaxDegree}+1$-coloring procedure, when using color space $[\Delta]$.
\begin{lemma} \label{lemma:triaddegree}
	The maximum degree of $G_V$ is at most $\Delta - 2$.
\end{lemma}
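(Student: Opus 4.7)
\emph{Proof plan.} I will fix an arbitrary slack pair $p=\{v,w\}\in V_V$ and bound its degree in $G_V$. By construction $p$ corresponds to a slack triad $(u,v,w)$ of some hard clique $C$, so that $u,v\in C$ while $w$ lies in a different hard clique $C'$, and the directed edge $\{u,w\}$ belongs to $F_3$ (it is the $e_1$ edge used to form $C$'s slack triad). Writing $S$ for the set of all slack pair vertices, vertex-disjointness of slack triads (Property~ii) of \Cref{lemma:step3}) implies that any slack pair $p'\neq p$ adjacent to $p$ contains a vertex of $(N(v)\cup N(w))\cap S$, hence
\[
\deg_{G_V}(p)\;\le\;|N(v)\cap S|+|N(w)\cap S|,
\]
so it suffices to prove that the right hand side is at most $\Delta-2$.

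The core step is to bound each summand by splitting the slack pair vertex's neighborhood into its in-clique and out-of-clique parts. For $v\in C$: the in-clique neighbors are exactly $C\setminus\{v\}$, and Property~iii) of \Cref{lemma:step3} together with $v\in S$ forces at most $\tfrac{1}{2}(\Delta-2\eps\Delta-1)$ of them to lie in $S$; the out-of-$C$ neighbors number at most $\Delta-|C|+1$ by Part~2 of \Cref{lem:hardcliqueprops}. The same two estimates apply symmetrically to $w$ in $C'$, but with one critical saving: $u$ is adjacent to $w$ via $\{u,w\}\in F_3$, lies outside $C'$, and is a slack vertex with $u\notin S$. This shaves one from the out-of-$C'$ count for $w$, giving at most $\Delta-|C'|$ in place of $\Delta-|C'|+1$ slack pair vertex neighbors outside $C'$.

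Summing and substituting $|C|,|C'|\ge(1-\eps/4)\Delta$ from \Cref{lem:acd} collapses the bound to $\Delta-\tfrac{3\eps\Delta}{2}$. With $\eps=\epsvalue$ and $\Delta\ge 63$ this is at most $\Delta-\tfrac{3}{2}$, and since $\deg_{G_V}(p)$ is an integer the desired estimate $\deg_{G_V}(p)\le\Delta-2$ follows. For $\Delta<63$ the remark following \Cref{lem:acd} implies that any dense graph is a disjoint union of cliques, so $\ehard=\emptyset$, no slack triad is ever formed, and $V_V=\emptyset$, making the claim vacuous.

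The only real obstacle is that a careless count of $|N(v)\cap S|+|N(w)\cap S|$ tops out at $\Delta-1$, off by exactly one from the target. Closing this final gap requires both locating a guaranteed non-slack-pair vertex in the combined neighborhood $N(v)\cup N(w)$—for which the slack vertex $u$ is the only obvious candidate since the two slack pair vertices themselves are non-adjacent—and combining the ACD size lower bound with an integrality argument to convert a $\Theta(\eps\Delta)$ surplus into the decrement from $\Delta-1$ to $\Delta-2$.
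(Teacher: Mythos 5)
Your proof is correct and follows essentially the same route as the paper's: split each slack pair vertex's slack-pair-vertex neighbors into in-clique ones (bounded via \Cref{lemma:step3}, property iii) and external ones, and save the crucial one unit because the slack vertex $u$ is an external neighbor of $w$ that, by vertex-disjointness, cannot be a slack pair vertex. The only difference is arithmetical rather than conceptual: the paper bounds the external neighbors by $\eps\Delta$ via \Cref{obs:outdegree}, which makes the sum collapse to exactly $\Delta-2$ with no integrality step, whereas you use the exact external degree $\Delta-|C|+1$ together with the AC size lower bound and then round $\Delta-\tfrac{3\eps\Delta}{2}$ down, which forces you to dispose of the (vacuous, as you correctly note) case $\Delta<63$ separately.
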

\begin{proof}Consider an arbitrary slack triad $(u,v,w)$ of a clique $C$ with slack pair vertices $v$ and $w$ and slack vertex $u$. Without loss of generality let $u,v\in C$ and $w\in C'\neq C$.
	Due to \Cref{lemma:step3}, property iii) $v$ and $w$ can be adjacent to at most $\frac{1}{2}(\Delta - 2\eps\Delta-1)$ other slack pair vertices in their respective cliques.
	Additionally, due to \Cref{obs:outdegree} $v$ is adjacent to at most $\eps\Delta$ other slack pair vertices in cliques different from $C$. The same holds for node $w$, except that the external neighbor $u$ of $w$ is not a slack pair vertex. Hence $w$ has at most $\eps\Delta-1$ slack pair vertices in cliques different from $C'$.  Thus, the maximum degree of $G_V$ is at most
	\begin{align*}
		2 \cdot \frac{1}{2}(\Delta - 2\eps\Delta-1) + (\eps\Delta - 1) + \eps\Delta= \Delta-2~. & \qedhere 
	\end{align*}
\end{proof}
\subsection{Phase 4B: Coloring remaining hard vertices}\label{sec:phase4B}
\begin{lemma}[Coloring] \label{lemma:step4}
	In Step~\ref{step:coloring} all remaining vertices in hard cliques can be colored with a two $deg+1$-list coloring instances.
\end{lemma}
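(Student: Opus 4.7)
I plan to prove the lemma by explicitly exhibiting the two $(\deg+1)$-list coloring instances and verifying the list-size condition in each. Throughout, for an uncolored vertex $y$, let $L(y)$ denote its residual palette (colors in $[\Delta]$ not used by already-colored neighbors) and $\deg_{\text{res}}(y)$ denote its degree in the currently uncolored residual graph of the instance.

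\textbf{Instance 1 (coloring slack vertices).} The first instance colors the slack vertex $u$ of every slack triad in $\cS$. After Phase~4A, the slack pair $(v,w)$ of $u$'s triad is colored with a common color and both $v$ and $w$ are neighbors of $u$. Hence $u$'s palette has size at least $\Delta-1$ and $u$ has at least two already-colored neighbors, so in the graph induced by the not-yet-colored hard vertices $u$ has at most $\Delta-2$ neighbors and $|L(u)| \geq \Delta-1 \geq \deg_{\text{res}}(u)+1$. Restricting to the subgraph induced on slack vertices, which is what Instance~1 actually processes, only decreases $\deg_{\text{res}}(u)$, so the $(\deg+1)$-list coloring condition holds globally.

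\textbf{Instance 2 (coloring the remaining hard vertices).} After Instance~1, the only uncolored hard vertices are those of each clique $C$ that are neither an in-clique slack pair vertex of $C$ nor its slack vertex. For each such $x\in C$, I would verify $|L(x)|\geq \deg_{\text{res}}(x)+1$ by tracking colored neighbors together with their color classes. The colored in-clique neighbors of $x$ are: (a) the slack vertex $u$ of $C$ (colored in Instance~1); (b) the in-clique slack pair vertex $v$ of $C$ (colored in Phase~4A); and (c) any out-of-clique slack pair vertices of other triads that happen to lie in $C$ (bounded by Lemma~\ref{lemma:step3} Part~iii)). External colored neighbors of $x$ are bounded via Observation~\ref{obs:outdegree} by $\eps\Delta$. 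The same-coloring of each slack pair, combined with Lemma~\ref{lem:hardcliqueprops} Part~3 (each vertex outside $C$ has at most one neighbor in $C$) and an accounting of color collisions produced by slack pairs whose members both appear in $N(x)$, yields the $+1$ margin needed for the $(\deg+1)$-list coloring condition.

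\textbf{Main obstacle.} The main technical obstacle is Instance~2: a naive count of palette losses versus residual-degree losses gives $|L(x)| = \deg_{\text{res}}(x)$ exactly, so the critical $+1$ must be extracted from a structural color-collision argument. I expect this step to rely on exhibiting, for each $x$, at least one pair of colored neighbors of $x$ stemming from a common slack pair (and hence sharing a color), exploiting the density of the graph and the bookkeeping guarantees of Lemma~\ref{lemma:step3} and Lemma~\ref{lem:hardcliqueprops}. Once this collision is identified, the two instances together color all remaining hard vertices, completing the proof.
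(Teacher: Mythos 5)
There is a genuine gap, and it sits exactly where you flagged it: your Instance~2 is not a $(\deg+1)$-list coloring instance, and the color-collision argument you hope to extract is provably unavailable in general. Take a \typeIplus hard clique $C$ with slack triad $(u,v,w)$, where $u,v\in C$ and $w\notin C$, and a generic vertex $x\in C\setminus\{u,v\}$ that is not a slack pair vertex of any foreign triad. By \Cref{lem:hardcliqueprops}, Part~3, the external vertex $w$ has exactly one neighbor in $C$, namely $u$; hence $x$ is \emph{not} adjacent to $w$, so the same-colored pair $\{v,w\}$ of $x$'s own clique contributes no collision inside $N(x)$. For a foreign slack pair $\{v',w'\}$, its two members lie in two different cliques, at most one of which is $C$, and nothing in the construction forces $x$ to be adjacent to the member outside $C$. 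So in the worst case all colored neighbors of $x$ carry pairwise distinct colors, every uncolored neighbor of $x$ belongs to your Instance~2, and $|L(x)|=\deg_{\text{res}}(x)$ exactly: a $\deg$-list instance, which need not be completable. Your Instance~1 is correct as far as it goes, but coloring the slack vertices \emph{first} is precisely what destroys the slack of the remaining vertices of their cliques.

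The paper runs the two instances in the opposite order and obtains the $+1$ from \emph{excluded uncolored neighbors} rather than from color collisions. Its first instance colors only the hard vertices that are neither in a slack triad nor have a neighbor outside the hard cliques. Every such vertex of a \typeIplus clique is adjacent (the clique is a clique) to the still-uncolored slack vertex $u$, which is excluded from the instance; every such vertex of a \typeII clique is adjacent to an excluded vertex having an uncolored neighbor in an easy clique (easy cliques are only colored in \Cref{sec:easyLoophole}). In either case the vertex has one more available color than it has uncolored neighbors inside the instance. The second instance then colors the slack vertices---which have slack from their same-colored pair, exactly as in your Instance~1---together with the vertices having uncolored neighbors outside the hard cliques, which draw their slack from those neighbors. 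To salvage your ordering you would need a guaranteed collision in $N(x)$ for every remaining $x$, and the slack-triad construction simply does not provide one.
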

\begin{proof}

	Let $V_{\mathsf{rest}}\subseteq V(\hardcliques)$ be all vertices of hard cliques that are not in a slack triad and that have a neighbor that is not in a hard clique. Let $\tilde{G}$ be the subgraph induced by $V_{\mathsf{rest}}$. Observe that any hard clique of \typeIplus contains one uncolored slack vertex that is not part of $V_{\mathsf{rest}}$. Any clique of \typeI also contains an uncolored node not part of $V_{\mathsf{rest}}$. Hence, each node in $\tilde{G}$ has slack and we can color all nodes of $V_{\mathsf{rest}}$ with a single $deg+1$-list coloring instance.

	Next, we can color the remaining uncolored nodes in hard cliques, i.e., the slack vertices and vertices with an uncolored neighbor $\notin V(\hardcliques)$, with another $deg+1$-list coloring instance, as each of them has slack.

\end{proof}

\subsection{Runtime of Algorithm~\ref{alg:coloring}}\label{sec:runtimehard}

Let $T_{MM}(n, \Delta)$ denote the round complexity for computing a maximal matching,$T_{SP}$ for computing the degree splitting, and let $T_{deg+1}(n, \Delta)$ represent the round complexity to compute a $deg+1$-list coloring, both in a graph with $n$ vertices and maximum degree $\Delta$. Additionally, let $T_{HEG}(n)$ be the round complexity of computing a hyperedge grabbing solution in a hypergraph with $n$ vertices and minimum degree at least by a $1.05$ factor larger than the maximum rank.

\begin{lemma} \label{lem:runtime2}
	The round complexity of \Cref{alg:coloring} is upper bounded by
	\begin{align}
		O(T_{MM}(n,\Delta)+T_{SP}(n)+T_{deg+1}(n,\Delta)+T_{HEG}(n))~.
	\end{align} In particular the round complexity is upper bounded by $\min\{O(\Delta+\log n), \widetilde{O}(\log^{5/3}n)\}$.
\end{lemma}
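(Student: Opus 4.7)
The plan is to walk through the steps of \Cref{alg:coloring} in order, attribute each step's round complexity to one of the four named subroutines (or to $O(1)$ local work), and then evaluate the resulting sum using the best known deterministic bounds.

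First I would handle Phases~1--3, which together contribute exactly one invocation of maximal matching ($T_{MM}(n,\Delta)$ for \myStepCref{alg:coloring:1}), one invocation of HEG ($T_{HEG}(n)$ for \myStepCref{alg:coloring:2}, justified by \Cref{lem:rankMinDegree} which verifies the gap $\delta_H>1.1 r_H$ required by \Cref{theorem:HEG}), and one invocation of degree splitting ($T_{SP}(n)$ for \myStepCref{alg:coloring:5}), plus $O(1)$ rounds of purely local work. The rearrangement and orientation in \myStepCref{alg:coloring:3}--\myStepCref{alg:coloring:4} only consult each vertex's matched neighbor in $F_1$ together with the HEG outcome, so they cost $O(1)$ rounds; the edge discarding in \myStepCref{alg:coloring:6} and the slack-triad formation in \myStepCref{alg:coloring:7} are likewise constant-time decisions, as already noted in \Cref{lemma:step3}.

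Next I would handle Phase~4. \myStepCref{alg:coloring:8} colors the virtual graph $G_V$ of slack pairs; by \Cref{lemma:triaddegree} its maximum degree is at most $\Delta-2$, so using the palette $[\Delta]$ the task reduces to a $deg+1$-list-coloring instance on $G_V$. The point requiring a small amount of care is that $G_V$ is virtual, but each virtual vertex corresponds to two adjacent vertices of $G$ and each virtual edge is realized by an edge of $G$ between the slack pairs' vertices, so one round on $G_V$ is simulated by $O(1)$ rounds on $G$; this contributes $T_{deg+1}(n,\Delta)$. \Cref{lemma:step4} then reduces \myStepCref{alg:coloring:9} to a constant number of $deg+1$-list-coloring instances on $G$ itself, contributing another $O(T_{deg+1}(n,\Delta))$. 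Summing all contributions yields the first asserted bound.

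Finally, I would plug in the best known deterministic complexities for each subroutine. The HEG term is $O(\log n)$ directly from \Cref{theorem:HEG} since $\delta_H/r_H$ is a constant strictly greater than $1$, and degree splitting likewise admits a deterministic $O(\log n)$-round algorithm. The terms $T_{MM}$ and $T_{deg+1}$ each admit two relevant regimes: an $O(\Delta+\log n)$-round algorithm (via palette-based iteration through a $(\Delta+1)$-coloring) and a $\widetilde{O}(\log^{5/3}n)$-round deterministic algorithm from recent work. Taking the better of the two regimes for the whole sum yields the claimed $\min\{O(\Delta+\log n),\widetilde{O}(\log^{5/3}n)\}$ bound. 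I expect the main obstacle to be primarily bookkeeping rather than substantive: verifying that the virtual graphs $G_Q$ and $G_V$ can host their respective subroutines at constant overhead, which follows because each virtual node has a compact local representation inside a single almost clique and each virtual edge is realized at constant hop distance in $G$.
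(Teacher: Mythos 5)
Your proposal is correct and follows essentially the same route as the paper's proof: attribute Phases~1--3 to one maximal matching, one HEG call (valid by \Cref{lem:rankMinDegree}), and one degree splitting, attribute Phase~4 to $O(1)$ many $deg+1$-list coloring instances via \Cref{lemma:triaddegree} and \Cref{lemma:step4}, then instantiate each term with the $O(\log n)$, $O(\Delta+\log^* n)$, and $\widetilde{O}(\log^{5/3}n)$ bounds. Your extra remark on the $O(1)$-overhead simulation of the virtual graphs is a harmless (and slightly more careful) addition; the only cosmetic discrepancy is that the paper cites $O(\Delta+\log^* n)$ for maximal matching and list coloring rather than your $O(\Delta+\log n)$, which does not affect the final bound.
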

\begin{proof}
	\Cref{alg:coloring} begins with Phase~1, which uses a maximal matching ($T_{MM}(n,\Delta)$) as a subroutine, followed by a HEG instance ($T_{HEG}(n)$). The sparsification is done in Phase~2 with a round complexity of $O\big(\eps^{-1}\cdot\log\eps^{-1}\cdot\big(\log\log\eps^{-1}\big)^{1.71}\cdot \log n\big)$ by \Cref{thm:degreeSplitting}. Setting up the slack triads in Phase~3 completes in a constant number of rounds, while Phase~4 requires a constant number of $deg+1$-list coloring instances ($T_{deg+1}(n,\Delta)$), as proven in \Cref{lemma:step4}. The time complexity of $T_{HEG}(n)$ is in $O(\log n)$ as we showed in \Cref{lem:rankMinDegree} that $\delta_H>1.1r_H$. For $\eps=1/63$ the round complexity $T_{SP}(n)$ is due \Cref{thm:degreeSplitting} in $O(\log n)$. The maximal matching and  (deg+1)-list coloring can either be solved in $\widetilde{O}(\log^{5/3}n)\}$ rounds with the algorithms from \cite{GG24}, or in $O(\Delta+\log^*n)$ rounds via the  algorithms from Linial, Panconesi and Rizzi, and Maus and Tonoyan \cite{linial92,panconesi-rizzi,MT20}.
\end{proof}

\subsection{Coloring Easy Cliques and Loopholes (Algorithm~\ref{alg:mainAlgorithm}, Line~\ref{step:coloreasycliques})}
\label{sec:easyLoophole}

\paragraph{Loophole Coloring} In this section, we describe \Cref{step:coloreasycliques}, which handles the coloring of vertices in easy almost-cliques and the remaining vertices in loopholes.
\begin{algorithm}[H]
	\caption{Coloring of loopholes and easy cliques}
	\label{alg:loophole}
	\begin{algorithmic}[1]
		\STATE Each loophole vertex votes for one of its loopholes;  $\mathcal{L}$  contains all loopholes with a vote
		\STATE Construct virtual graph $G_\mathcal{L}$ of $\mathcal{L}$
		\STATE Compute a $6$-ruling set $\cL_{rs}$ in $G_\mathcal{L}$\label[step]{step:loopholeRuling}
		\STATE Run BFS with depth $25$ from $\cL_{rs}$ to layer remaining vertices by depth.
		\FOR{$i = 25\ldots1$}
		\STATE Perform $(\deg+1)$-list coloring for vertices in layer $i$
		\ENDFOR
		\STATE Color the  vertices in $\cL_{rs}$ by bruteforce in $O(1)$ rounds
	\end{algorithmic}
\end{algorithm}

In the case where we parameterize our runtime by $n$ and $\Delta$ we use the following result (on a virtual graph) to compute the ruling set in \Cref{step:loopholeRuling}.
\begin{restatable}[Ruling Sets \cite{M21,SEW13}]{lemma}{thmRulingSets}\label{thm:rulingSet}
	For any constant integer $r\geq 2$, there is a deterministic \LOCAL algorithm that computes $(2,r)$-ruling in $O(\Delta^{\frac{2}{r+2}})+\log^* n$ rounds on any graph with maximum degree $\Delta$.
\end{restatable}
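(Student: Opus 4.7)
The proof follows the algorithmic scheme of Schneider, Elkin, and Wattenhofer~\cite{SEW13}, later refined by Maus~\cite{M21}. The plan is to combine Linial's $O(\log^* n)$-round color reduction with an iterative peeling procedure that builds the ruling set in $O(\Delta^{2/(r+2)})$ additional rounds.

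First, I would invoke Linial's algorithm to compute a proper $(\Delta+1)$-coloring in $O(\log^* n)$ rounds. This coloring serves as a canonical priority function that deterministically breaks ties in all subsequent local decisions, so from this point on no further use of identifiers is needed.

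Next, the ruling set $S$ (initially empty) is built in $k = \Theta(\Delta^{1/(r+2)})$ phases, maintaining a set $A$ of \emph{undecided} vertices (initially $V$). In each phase I would: (i) compute an $f$-defective $q$-coloring of $G[A]$ with $f = \Theta(\Delta^{(r+1)/(r+2)})$ and $q = \Theta(\Delta^{1/(r+2)})$, which is obtainable in $O(1)$ rounds given the initial Linial coloring together with standard defective-coloring constructions (e.g.\ Kuhn's); (ii) iterate through the $q$ color classes sequentially, promoting a vertex $v$ to $S$ whenever no element of $S$ lies within $r$ hops of $v$; (iii) remove from $A$ every vertex within distance $r$ of the updated $S$. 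Because each color class induces a subgraph of maximum degree $f$ and is processed in $O(r)$ rounds of distance-$r$ neighborhood inspection, one phase costs $O(q \cdot r) = O(\Delta^{1/(r+2)})$ rounds. After all $k$ phases, the invariant on defective degree guarantees that $G[A]$ consists of isolated vertices, which are added to $S$ in $O(r)$ further rounds.

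The total round complexity decomposes as $O(\log^* n)$ for preprocessing plus $O(k \cdot q \cdot r) = O(\Delta^{2/(r+2)})$ for the main loop, matching the claimed bound; the dependence on $r$ is absorbed into the $O(\cdot)$ since $r$ is a constant. The main technical obstacle is the joint tuning of the defective-coloring parameters $(f,q)$ so that the number of phases, the per-phase cost, and the distance-$r$ reachability work out to the same exponent $1/(r+2)$, while proving the inductive invariant that every vertex ever removed from $A$ has a ruling-set member within distance $r$. Establishing this invariant amounts to showing that the defective degree of $G[A]$ shrinks by a factor of $\Delta^{1/(r+2)}$ in each phase, which in turn forces termination after exactly $k$ phases.
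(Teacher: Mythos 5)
First, a point of reference: the paper does not prove this lemma at all --- it is imported as a black box from \cite{M21,SEW13} and never restated with a proof in the appendix, so your reconstruction can only be judged on its own merits. Judged that way, it has concrete gaps beyond the acknowledged ``technical obstacle.'' The preprocessing step is already wrong as stated: Linial's algorithm yields an $O(\Delta^2)$-coloring in $O(\log^* n)$ rounds, not a $(\Delta+1)$-coloring; no $O(\log^* n)$-round $(\Delta+1)$-coloring algorithm is known (the best bound, quoted in this very paper as \Cref{lem:listColoring}, is $O(\sqrt{\Delta\log\Delta})+O(\log^* n)$). That part is repairable, since an $O(\Delta^2)$-coloring suffices as a seed. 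The main loop is not: your parameters are mutually inconsistent, because by the standard defective-coloring tradeoff an $f$-defective coloring with $f=\Theta(\Delta^{(r+1)/(r+2)})$ requires $q=\Omega((\Delta/f)^2)=\Omega(\Delta^{2/(r+2)})$ colors, not $\Theta(\Delta^{1/(r+2)})$; substituting the correct $q$ into your own accounting gives $k\cdot q\cdot r=\Theta(\Delta^{3/(r+2)})$ rounds, which overshoots the claimed bound.

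More fundamentally, two ideas are missing. First, processing an entire defective color class in parallel does not preserve independence: a class of an $f$-defective coloring induces a subgraph of maximum degree $f$, so two \emph{adjacent} vertices in the same class can both pass the ``no $S$-element within $r$ hops'' test and both join $S$, violating the distance-$2$ requirement of a $(2,r)$-ruling set. Breaking symmetry inside each class cheaply is exactly where the recursion on $r$ lives in \cite{SEW13,M21}: a $(2,r)$-ruling set is built by computing a $(2,r-1)$-ruling set on lower-degree induced subgraphs and then ruling at one additional hop, and balancing the $r$ recursion levels is what produces the exponent $2/(r+2)$. Your flat two-level scheme (phases times color classes) offers no mechanism by which a larger $r$ buys a smaller exponent. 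Second, the invariant that the defective degree of $G[A]$ shrinks by a $\Delta^{1/(r+2)}$ factor per phase is asserted but never argued, and nothing in the described phase forces it: a vertex survives in $A$ only because no $S$-vertex lies within distance $r$ of it, which constrains neither its degree in $G[A]$ nor the number of surviving phases. As written, neither termination after $k$ phases nor the round bound is established.
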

Let $T_{6-rs}(n,\Delta)$ be the round complexity to compute a $6$-ruling set.
\begin{lemma}\label{lem:easycoloring}
	\Cref{alg:loophole} completes a $\Delta$-coloring in $T_{4-rs}(n,\Delta)+O(T_{deg+1}(n,\Delta))$ rounds.
\end{lemma}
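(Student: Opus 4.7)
The plan is to argue correctness layer-by-layer and then add up the round costs; the structural claim that the BFS reaches every remaining uncolored vertex within a constant number of hops is where the real work lies.

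First, I would establish the reachability claim: after \myStepCref{step:coloreasycliques} starts, every uncolored vertex lies at BFS-depth at most $25$ from some vertex of a loophole in $\cL_{rs}$. The only uncolored vertices at this point are vertices of easy cliques and loophole vertices. By \Cref{def:hardeasycliques} every easy clique contains a vertex belonging to some loophole of at most six vertices; such a loophole vertex casts a vote in line 1 of \Cref{alg:loophole}, so the voted set $\cL$ covers every uncolored vertex via its own or an adjacent loophole. Combining (i) that $\cL_{rs}$ is a $(2,6)$-ruling set of $\cL$ in $G_\cL$, (ii) that two loopholes adjacent in $G_\cL$ are at constant distance in $G$, (iii) that each loophole has at most $6$ vertices, and (iv) that almost-cliques have diameter two, the distance in $G$ from any remaining uncolored vertex to some vertex of a loophole in $\cL_{rs}$ is bounded by a constant; choosing this constant to be at most $25$ justifies the depth bound used by the algorithm.

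Second, I would verify that the $(\deg+1)$-list coloring succeeds on every layer $i\in\{25,\ldots,1\}$. When layer $i$ is processed, vertices in layers $>i$ are already colored, while the vertices in layers $<i$ (and in particular the vertices of $\cL_{rs}$) are still uncolored. A vertex $v$ in layer $i$ has, by definition of BFS layering, at least one neighbor $u$ in layer $i-1$. Let $c(v)$ denote the number of already-colored neighbors of $v$ and $\deg_i(v)$ the number of neighbors of $v$ in layer $i$. Since $u$ is uncolored and not in layer $i$, we have $c(v)+\deg_i(v)\leq \deg(v)-1\leq \Delta-1$, hence the palette $|L(v)|=\Delta-c(v)\geq \deg_i(v)+1$. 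Thus layer $i$ is a valid $(\deg+1)$-list-coloring instance and is solved in $T_{\deg+1}(n,\Delta)$ rounds; since $25$ is a constant, all layers together cost $O(T_{\deg+1}(n,\Delta))$.

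Third, I would handle the brute-force coloring of the loopholes in $\cL_{rs}$. Once all layers $\geq 1$ are colored, every vertex $v$ of a loophole $L\in\cL_{rs}$ has all neighbors outside $L$ already colored, so its palette has size $|L(v)|\geq \Delta-(\deg(v)-\deg_L(v))\geq \deg_L(v)$, using $\deg(v)\leq \Delta$ for non-clique even cycles and $\deg(v)\leq \Delta-1$ for the degree-$<\Delta$ loophole type. By \Cref{lem:loophole} (and the trivial case of a degree-$<\Delta$ singleton) the induced graph on $L$ is $\deg$-list colorable, which can be computed by brute force in $O(1)$ rounds since each loophole has at most $6$ vertices. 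Because $\cL_{rs}$ is a $(2,\cdot)$-ruling set in $G_\cL$, distinct loopholes in $\cL_{rs}$ are vertex-disjoint and non-adjacent in $G$, so all brute-force colorings are independent and run in parallel.

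Finally, the runtime follows: voting and constructing $G_\cL$ take $O(1)$ rounds, the ruling-set step costs $T_{6\text{-}rs}(n,\Delta)$, the depth-$25$ BFS and the final brute-force step each take $O(1)$, and the $25$ list-coloring instances take $O(T_{\deg+1}(n,\Delta))$ in total. The main obstacle, and the step I expect to require the most care, is the constant-depth reachability claim in the first paragraph: it forces one to pin down what adjacency in $G_\cL$ precisely means and to bound the resulting diameter blow-up from $G_\cL$ into $G$ using the sizes of loopholes and the diameter-two property of almost-cliques.
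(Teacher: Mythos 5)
Your proof follows essentially the same route as the paper's: vote to bound the number of loopholes, take a ruling set on the virtual loophole graph, BFS-layer to constant depth, color layers outermost-in as $(\deg+1)$-list-coloring instances, and finish the ruling-set loopholes by brute force using their $\deg$-list-colorability. Your treatment of the depth-$25$ reachability claim and of the palette bounds is in fact somewhat more explicit than the paper's, but the argument is the same, so there is nothing further to add.
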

\begin{proof}
	We begin by limiting the total number of considered loopholes to $n$. To achieve this, each vertex votes for an arbitrary loophole it is in, and we define $\mathcal{L}$ as the set of loopholes that receive at least one vote.
	We define the following virtual graph instance
	$$G_\cL=(V_\cL, E_\cL)$$
	where $V_\cL$ is the set of loopholes $\cL$ and there exists an edge in $E_\cL$ between two vertices if the respective loopholes intersect or contain vertices that  are connected by an edge in $G$.

	While loopholes may overlap, we require a non-intersecting subset to avoid interference during coloring. To enforce this, we compute a $6$-ruling set on $G_\cL$. Since the algorithm only considers loopholes of size at most $3$, their diameter is bounded by $6$. Each vertex in a loophole can be adjacent to at most $\Delta$ other loopholes and contained in at most $\Delta^3$ loopholes, yielding an upper bound of $6(\Delta^3 + \Delta) \leq \Delta^4$ on the degree $\Delta'$ of $G_\cL$. We use \Cref{thm:rulingSet} to compute a $6$-ruling set on $G_\cL$.
	Once the non-intersecting loopholes—colored last—are selected, we first color the remaining vertices in easy almost-cliques. To do so, we perform a breadth first search (BFS) around each loophole up to a depth of $25$ in the original graph, assigning vertices to layers based on their lowest BFS depth.
	These $25$ layers contain all loophole vertices and also  all vertices in easy cliques as there are at most four loopholes of diameter $3$ between a vertex of such an easy clique and a loophole in $\cL$. Coloring proceeds from  layer $25$ downward to layer $1$, using $25$  $(deg+1)$-list coloring instances - one for each layer; note that each node has slack when coloring its layer due to at least one uncolored neighbor in the layer below.  We have that $T_{deg+1}(n,\Delta)=\min\{O(\Delta+\log n), \widetilde{O}(\log^{5/3}n)\}$.
	Finally, the loopholes themselves are colored via bruteforce, which is applicable due to their constant diameter of $3$.
\end{proof}

\subsection{Proof of Theorem~\ref{thm:DeltaColoring}}\label{sec:proofMainTheorem}
The validity of the coloring follows from \Cref{lemma:step4} and \Cref{lem:easycoloring}. \Cref{lemma:step4} shows that a valid $\Delta$-coloring can be computed for all vertices in hard cliques, while \Cref{lem:easycoloring} ensures that the $\Delta$-coloring can be completed for the remaining vertices in easy cliques and loopholes.

\textbf{Runtime.}
In Step 1, computing the ACD has a round complexity of $O(1)$ by \Cref{lem:acd}.  By \Cref{lem:runtime2}, the  step of \Cref{alg:mainAlgorithm} that colors vertices in hard cliques requires at most $\min\{O(\Delta+\log n), \widetilde{O}(\log^{5/3}n)\}$ rounds. The last two steps of coloring vertices in easy cliques and last the vertices in loopholes, detailed in \Cref{alg:loophole}, can be done in $O(\Delta)+\min\{O(\Delta+\log n), \widetilde{O}(\log^{5/3}n)\}$ rounds due to \Cref{lem:easycoloring}. Computing the $6$-ruling set requires $O(\Delta'^{\frac{2}{r+2}}) = O(\Delta^{\frac{8}{r+2}})= O(\Delta)$ rounds.
Combining these results, we conclude that the total round complexity is in $\min\{O(\Delta+\log n), \widetilde{O}(\log^{5/3}n)\}$.

\section{\texorpdfstring{Randomized $\Delta$-coloring}{Randomized Delta-coloring}}\label{sec:randomizedalg}
\begin{restatable}{theorem}{thmDeltaColoringRandomized}
	\label{thm:DeltaColoringRandomized}
	There is a randomized algorithm with runtime $\min\{\widetilde{O}(\log^{5/3}\log n), O(\Delta+\log\log n)\}$ that w.h.p.\ $\Delta$-colors  dense graphs.
\end{restatable}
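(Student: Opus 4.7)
The plan is to instantiate the standard graph-shattering framework, using the deterministic algorithm of \Cref{thm:DeltaColoring} as the post-shattering subroutine. This is in fact the strategy already signaled in the introduction of the paper: a logarithmic-time deterministic algorithm run on shattered components of size $\poly\log n$ automatically yields a $\log\log n$-type randomized bound.

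First, I would design a randomized pre-shattering procedure that, in $O(\log\log n)$ rounds, produces a partial $\Delta$-coloring whose uncolored subgraph decomposes with high probability into connected components of size $N = O(\poly\log n)$. Because the graph is dense, the standard one-round color trial does not generate slack via non-edges, so the plan is to reuse the slack-triad machinery from \Cref{sec:coloringHardCliques}. Concretely, I would compute the ACD, identify a collection of vertex-disjoint slack triads via (randomized analogues of) the phases of \Cref{alg:coloring}, and then let each slack pair attempt a uniformly random common color from $[\Delta]$. With constant probability the attempt succeeds without external conflicts and produces permanent slack for the associated slack vertex, which is then colored with constant probability by a subsequent random color trial. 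Iterating a constant number of such trials leaves an uncolored set that shatters into small components by the standard analysis.

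Second, on each surviving component $H_i$ (of size at most $N = O(\poly\log n)$), I would run the deterministic algorithm of \Cref{thm:DeltaColoring} in parallel. Since that algorithm has complexity $\min\{\widetilde{O}(\log^{5/3} n), O(\Delta + \log n)\}$ on $n$-vertex inputs, its complexity on an $N$-vertex component is exactly $\min\{\widetilde{O}(\log^{5/3}\log n), O(\Delta + \log\log n)\}$, as required. One has to verify that each component inherits the structure needed by the deterministic algorithm; because any induced subgraph of a dense graph on which the pre-shattering leaves slack can be recast as a list-coloring variant of the hard-clique/loophole problem, the same subroutines (ACD on the residual graph, slack-triad construction, $\deg+1$-list coloring) apply.

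The main obstacle is the shattering analysis in the dense regime. One must show that after the pre-shattering trials the probability that a radius-$r$ ball around a vertex is entirely uncolored decays exponentially in $r$, so that the Chang--Pettie / Fischer--Ghaffari shattering lemma applies. This requires controlling the dependencies among the random color trials, which I would do by exploiting the vertex-disjointness guaranteed by \Cref{lemma:step3} together with the bounded maximum degree $\Delta - 2$ of the slack-pair conflict graph $G_V$ from \Cref{lemma:triaddegree}. These bounds limit how many local trials share randomness with the trial at any fixed slack triad, so the uncoloring events behave like those in a bounded-dependency scheme; a routine union bound over rooted subtrees in the dependency graph then yields component size $O(\poly\log n)$ with high probability.
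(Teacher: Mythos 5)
Your high-level architecture (randomized shattering followed by the deterministic algorithm on $\poly\log n$-size components) is the same as the paper's, but there are two genuine gaps in how you fill it in. First, your pre-shattering procedure cannot work as described: you propose to build the slack triads via ``randomized analogues of the phases of \Cref{alg:coloring},'' but those phases rest on a maximal matching, a hyperedge-grabbing instance, and a degree splitting, each of which costs $\Omega(\log n)$ rounds on the full graph --- far exceeding the $O(\log\log n)$ budget of a pre-shattering phase. The paper avoids this entirely by taking the pre-shattering phase black-box from \cite{GHKM21}: it randomly places \emph{non-adjacent} $T$-nodes (so all slack pairs can simply be same-colored with color one, with no conflict graph to color), forms $O(1)$ layers around them, and inherits the shattering guarantee from that work; your proposed dependency analysis via \Cref{lemma:step3} and \Cref{lemma:triaddegree} is therefore not needed, and the slack-triad machinery is used only \emph{after} shattering, where $\log N = O(\log\log n)$ makes its logarithmic subroutines affordable.

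Second, you wave away the adaptation of the deterministic algorithm to the shattered components with the phrase that the residual instance ``can be recast as a list-coloring variant,'' but this adaptation is the actual technical content of the paper's proof. The components are not clean dense-graph instances: boundary vertices adjacent to already-colored slack pairs (all colored with color one) lose slack and become useless for sending proposals, which forces the paper to (i) extend the loophole definition to include vertices with uncolored external neighbors, (ii) argue via the adjustable $T$-node spacing parameter that each hard clique loses at most one proposal so that the inequality in \Cref{eqn:HEG} still closes the HEG instance, and (iii) restrict the color space for new slack pairs to $\{2,\ldots,\Delta\}$ so they cannot conflict with the pre-shattering colors, relying on \Cref{lemma:triaddegree} giving degree at most $\Delta-2$. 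Finally, your component-size claim $N=O(\poly\log n)$ only holds when $\Delta\leq\poly\log n$; the paper handles $\Delta=\omega(\log^{21}n)$ separately with the $O(\log^* n)$-round algorithm of \cite{FHM23}, a case split your proposal omits and without which the stated runtime does not follow.
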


If $\Delta=\omega(\log^{21}n)$: Use the $O(\log^* n)$-round  $\Delta$-coloring algorithm from \cite{FHM23}. Otherwise, we use the shattering framework that is at the core of essentially all recent randomized algorithms for distributed local graph problems. More precisely, we build on the shattering approach to $\Delta$-coloring from the current state-of-the-art randomized algorithm for constant-degree graphs by Ghaffari, Hirvonen, Kuhn, and Maus
\cite{GHKM21}; actually we use their algorithm that is entitled to be efficient for \emph{large} $\Delta$ in their work\footnote{Their algorithm for small $\Delta$ removes loopholes with radius up to $\log \log n$; the additional symmetry-breaking to color these loopholes requires $\omega(\log\log n)$ rounds. Their algorithm for large $\Delta$ is inefficient in other subroutines that we replace with subroutines of this paper. The benefit is that it only remove constant-sized loopholes. }. We next present the high-level overview of their algorithm. All steps, except for the post-shattering phase in \myStepCref{step:post-shattering} are taken black box from their work and explained in great detail in \cite{GHKM21}. We focus on the pre- and post-shattering phase in \myStepCref{step:pre-shattering}, and \ref{step:post-shattering}.

\begin{algorithm}
	\caption{Randomized $\Delta$-coloring of dense graphs }
	\begin{algorithmic}[1]
		\STATE If $\Delta=\omega(\log^{21}n)$: Use the $O(\log^* n)$-round  $\Delta$-coloring algorithm from \cite{FHM23}
		\STATE Pre-processing I: Identify small radius loopholes
		\STATE Pre-processing II: Compute an $O(1)$-ruling set on the virtual graph of loopholes\label{step:randomizedLoopholeRulingSet}
		\STATE Pre-processing III: Compute a $O(1)$-layering around loopholes in the ruling set and put loophoses and layers aside
		\label{step:randomizedLayers}
		\STATE Pre-Shattering: Randomly place non-adjacent $T$-nodes, same color the slack-pair with color one \label{step:pre-shattering}
		\STATE \textbf{**New** Post-shattering: } Use modified deterministic algorithm from \Cref{sec:deterministicDeltaColoring}/\Cref{thm:DeltaColoring} on small components in parallel \label{step:post-shattering}
		\STATE Post-processing I: Use $O(1)$ $deg+1$-list coloring instances to color layers from \Cref{step:randomizedLayers}
		\STATE Post-processing II: Color ruling set layers from \Cref{step:randomizedLoopholeRulingSet} in a brute-force manner in $O(1)$ rounds.
	\end{algorithmic}
\end{algorithm}

In the randomized pre-shattering phase they randomly place so-called $T$-nodes into the graph. These are identical to our slack triads, with the crucial difference that in their work these $T$-nodes  are non-adjacent. Hence, all slack pairs can be colored with the same color (for simplicity with the smallest color) providing slack to the respective slack vertex. Then, one forms a constant number layers of uncolored vertices  around the slack vertex with the same objective as in our work: coloring them in the reverse order. The slack vertex can be colored at the very end, as it has two same-colored neighbors, the first layer around the slack vertices forms a $deg+1$-list coloring as all vertices in the layer have slack due to the uncolored slack neighbor, and the second layer also has slack as the first layer is uncolored etc. They show that this process with a constant number of layers shatters the graph into components of size $N=\poly\Delta\cdot  \log n=O(\poly\log n)$ size. At this point \cite{GHKM21} uses a suboptimal algorithm to color these small components in time $O(\log^2 N)=O(\log^2\log n)$ (see Phase~6 in \cite{GHKM21}). This is the part that we replace with a slightly modified version of our deterministic algorithm from the previous section.

\paragraph{\myStepCref{step:post-shattering}: Modified post-shattering.} First, we extend the definition of a loophole to also comprise of nodes that have an uncolored neighbor outside of the small component; this change propagates to the definition of an easy almost cliques. Note that except for the slack pair in the $T$-nodes all nodes of the graph are still uncolored when handling the small components. Thus, any such loophole vertex also has slack. The second difference is due to adjacent slack pair vertices. For example, when a vertex $v$ at the boundary of the small component is adjacent to one vertex $u$ of a slack pair colored with the first color. Vertex $v$ does not have slack and hence cannot form a loophole. If $u$ is the only external neighbor of $v$ then $v$ cannot send a proposal to grab an edge for its clique; in fact vertex $v$ becomes useless for the clique. Similarly, if a vertex $u$ of a (hard) clique is a slack pair vertex colored with the first color, the clique cannot send a proposal through vertex $u$. In fact vertex $u$ becomes useless for the clique.  By chosing the distance between $T$-nodes to be large enough we can limit this to at most one useless vertex per clique. Note that the algorithm \cite{GHKM21} comes with an adjustable parameter $b$ to control this distance; $b$ can be chosen arbitrarily as long as it is constant.  Thus, in fact, each clique loses one potential proposal to grab an edge, but \Cref{eqn:HEG} has enough leeway to still solve the respective HEG instance in that case. Their pre-shattering phase has a pre-processing phase that removes constant-sized loopholes and their neighborhoods. Hence the $T$-nodes are only placed in hard cliques.

The rest of the proof is identical to the previous section, except that we use color space $\{2,\ldots,\Delta\}$ when coloring the slack pairs. Hence, there cannot be any conflicts with already existing colored slack pairs that only use the first color. Due to \Cref{lemma:triaddegree}, the maximum degree of the virtual instances of slack pairs is at most $\Delta-2$, and hence same-coloring the slack pairs with $\Delta-1$ colors is a $deg+1$-list coloring instance. When coloring the $O(1)$ layers around the slack vertices, we use all $\Delta$ colors as here nodes have slack regardless of whether they are adjacent to a node with the first color or not.

\paragraph{Runtime analysis $O(\Delta+\log\log n)$.} The runtime of the algorithm consists of several components. First, their pre-shattering phase has a pre-processing phase of removing constant-size loopholes that is also based on a ruling set computation, similar to how we deal with loopholes. By taking a sufficient $O(1)$-ruling set, the respective problem can be solved in $O(\Delta+\log^*n)$. The same runtime upper bound holds for the maximal matching computation of our deterministic algorithm in the small components, see \myStepCref{alg:coloring:1}.  Also the constant number of $deg+1$-list coloring and ruling set computations that our algorithm employs can be done in the same time \cite{MT20}; actually these can even be done in $O(\sqrt{\Delta\log\Delta}+\log^* n)$ rounds.
Additionally, we have $O(\log N)=O(\log\log n)$ rounds for solving the HEG subroutine and for sparsifying the computed matching, see \myStepCref{alg:coloring:2} and Phase~2 of \Cref{alg:coloring}.
Thus, the total runtime is upper bounded by $O(\Delta+\log\log n)$.

\paragraph{Runtime analysis $\widetilde{O}(\log^{5/3}\log n)$.}
Alternatively, we can also upper bound the runtime as $\widetilde{O}(\log^{5/3}\log n)$. Whenever $\Delta=\omega(\log^{21}n)$ we use the randomized $\Delta$-coloring algorithm from \cite{FHM23} that can solve the problem in $O(\log^{*}n)$ rounds. In the other case the small components in the post-shattering phase are of size $N=O(\poly\Delta \cdot \log n)=O(\poly\log n)$. We use the deterministic algorithms from \cite{GG24} for maximal matching and for the constant number of $deg+1$-list coloring instances. The invocation of the ruling set algorithm is replaced by the MIS algorithm, i.e., a $1$-ruling set, from the same paper. Each of these algorithms are deterministic and run in $\widetilde{O}(\log^{5/3}N)=\widetilde{O}(\log^{5/3}\log n)$ rounds on the small components. Lastly, observe that the pre-processing of the pre-shattering phase also involved the computation of a certain ruling set on a virtual graph with maximum degree $\poly \Delta$. This is also replaced with the MIS algorithm from \cite{GG24} combined with Ghaffari's seminal MIS algorithm \cite{GhaffariImproved16} resulting in a runtime of $O(\log (\poly\Delta)) +\widetilde{O}(\log^{5/3}\log n)=\widetilde{O}(\log^{5/3}\log n)$ rounds. Note, that in this latter argument we used that $\Delta\leq \poly\log n$, as otherwise the runtime of the preshattering phase is $O(\log \poly\Delta)\gg \poly\log \log n$, and also the post-shattering phase of Ghaffari's MIS algorithm cannot profit from the polylogarithmic component size (as components are not actually polylogarithmic for large $\Delta$).

\bibliographystyle{alpha}
\bibliography{references}

\appendix

\section{Further related work}
\label{sec:relatedWork}
The first distributed algorithms for $\Delta$-coloring were introduced by Panconesi and Srinivasan in the 90ies \cite{PS95}. Ghaffari, Hirvonen, Kuhn, Maus improved upon their results by presenting several algorithms for the problem with different tradeoffs, including a randomized algorithm with runtime $O(\log\Delta+\poly\log\log n)$ \cite{GHKM21}. The runtime can be achieved by using faster algorithms for the $deg+1$-list coloring subroutines, e.g., the one from \cite{HKNT22}. Their work also contains the state-of-the-art algorithms for $\Delta$-coloring constant-degree graphs, namely an $O(\log^2 n)$-round deterministic and an $O(\log^2\log n)$-round randomized algorithm.
Recently, Fischer, Halldorsson, and Maus presented the first randomized sublogarithmic-time algorithm for the problem in the \LOCAL model for general graphs, namely a $\poly\log\log n$-round algorithm \cite{FHM23}. Their work also $\Delta$-colors in as few as $O(\log^* n)$ rounds when $\Delta=\omega(\log^{21}n)$.  Maus and Uitto obtain a randomized $\poly\log\log n$-round \CONGEST algorithm for the problem by deducing a more general result on efficient algorithms for the \lovasz Local Lemma \cite{MU21}. Very recently,  Halldorsson and Maus developed a randomized  $\poly\log \log n$-round  \CONGEST algorithm for general graphs \cite{HM24}. In \cite{CCDM24} Coy, Czumaj, Davies, and Mishra implemented the $\Delta$-coloring algorithm from \cite{FHM23} to obtain an algorithm in the massively parallel computing model. Assadi, Kumar, and Mittal designed a single-pass semi-streaming algorithm for $\Delta$-coloring \cite{AKM22}. In their search for a coloring algorithm that uses fewer than $\Delta$ colors, Bamas and Esperet \cite{BamasEsperet19} introduced a polylogarithmic randomized algorithm achieving a $\Delta - \sqrt{\Delta}$ coloring.

\section{Missing Proofs from Section~\ref{sec:deterministicDeltaColoring}}
\label{sec:missingProofs}
\lemmasetsize*
\begin{proof}
	We start by proving that $|f(V(Q_{C}^i))| = |Q_{C}^i|$. Assume $f(u) = f(v)$ for $u,v\in Q_{C}^i$ then $F(v)$ is adjacent to $u$ and $v$ which is a contradiction to \Cref{lem:hardcliqueprops:notwons}, Part~\ref{lem:hardcliqueprops}.

	To prove the second equality $|\phi(V(Q_{C}^i))| = |Q_{C}^i|$, we use a similar argument as in the first part. Assume for contradiction, that $\phi(u) = \phi(v)$ for two distinct vertices $u, v \in Q_{C}^i$, where both vertices request to grab the same edge $\phi(u) = \phi(v) = \{e_1, e_2\}$. If $f(u) = f(v)$, then this case directly reduces to the first part of the proof, and the claim follows. Otherwise, $u$ and $v$ must be adjacent to different endpoints of the edge $\phi(u) = \{e_1, e_2\}$. Let $c_1$ and $c_2$ be two vertices in $C$, as defined in the first part: $c_1 \in C \setminus \{u, v\}$ is a vertex not adjacent to $f(u)$, and $c_2 \in C$ is a vertex adjacent to both $u$ and $c_1$. With this, the vertices $u, e_1, e_2, v, c_1, c_2$ again form a loophole that intersects the clique $C$. This contradicts the assumption that $C$ is a hard clique, which by definition cannot intersect any loophole. Thus, the assumption that $\phi(u) = \phi(v)$ leads to a contradiction, and we conclude that $|\phi(V(Q_{C}^i))| = |Q_{C}^i|$.
\end{proof}

\lemmahardcliqueprops*
\begin{proof}
	\textbf{Property~1:} Each vertex of a hard clique $C$ has degree exactly $\Delta$, otherwise the clique would be classified as easy clique based on \Cref{def:loopholes}. If two nodes $u_1,u_2\in C$ were not connected, they together with two adjacent vertices $u_3,u_4$ in $N(u_1)\cap N(u_2)$ would form a loophole. See \Cref{fig:loophole}. The vertex $u_3$ exists for the following reason: As $u_1$ and $u_2$ each have $(1-\eps)\Delta$ neighbors in $C$, there are at most $\eps\Delta$ vertices $A_1\subseteq C$ not connected to $u_1$ and at most $\eps\Delta$ vertices $A_2\subseteq C$ not connected to $u_2$. Hence, $|C\setminus (A_1\cup A_2)|\geq |C|-2\eps\Delta \geq (1-2.25\eps)\Delta>0$, where we also used that $|C|\geq (1-\eps/4)\Delta$ by the properties of ACs given in \Cref{lem:acd}. Let $u_3\in C\setminus (A_1\cup A_2)$ be any vertex in that set. Similarly observe that there are at most $\eps\Delta$ vertices $A_3\subseteq C$ not connected to $A_3$. Then $u_4$ can be an arbitrary vertex in $C\setminus (A_1\cup A_2\cup A_3)$ that exists as $|C\setminus (A_1\cup A_2\cup A_3)|\geq (1-3.25\eps)\Delta$.
	\textbf{Property~2:} As $C$ forms a clique and each vertex $v\in C$ has degree $\Delta$, each vertex has has $e_C=\Delta - |C| + 1$ neighbors outside its clique.
	\textbf{Property~3:}
	Assume for contradiction  that some $w\notin C$ is adjacent to two distinct vertices $u$ and $v$ of $C$.  Let $c_1 \in C \setminus \{u, v\}$ be a vertex inside $C$ that is not adjacent to $w$. Note that such a vertex $c_1$ must exist; otherwise, $w$ would be a part of $C$ due to \Cref{lem:acd}, property iii). Further, let $c_2 \in C$ be a vertex adjacent to both $u$ and $c_1$. Given this, the vertices $u, v, c_1, c_2$ form a loophole (\Cref{def:loopholes}), a contradiction to $C$ being hard.
\end{proof}
\begin{figure}
	\centering
	\scalebox{0.7}{\begin{tikzpicture}
	\def\xO{0}
	\def\yO{0}
	\def\R{3} 
	\def\Ri{2.4}
	\def\rA{0.35}
	\def\N{6} 
	
		\draw[thick, gray] (\xO,\yO) circle(\R cm);
		
		\foreach \i in {0,1,...,\N} {
			\pgfmathsetmacro{\theta}{360/\N*\i} 
			\pgfmathsetmacro{\x}{\Ri*cos(\theta)}
			\pgfmathsetmacro{\y}{\Ri*sin(\theta)}
			\coordinate (P\i) at (\x,\y);
		}

	\foreach \i in {0,1,...,\N} {
		\foreach \j in {0,1,...,\N} {
			\ifnum \i<\j 
				\draw[color=gray] (P\i) -- (P\j);
				\draw[color=gray] (P\j) -- (P\i);
                
			\fi
		}
	}

    \foreach \i in {0,2,3,4} {
        \draw[] (P\i) -- ++(360/\N*\i-15:1.5cm);
        \draw[] (P\i) -- ++(360/\N*\i+15:1.5cm);
    }
    \draw[] (P1) -- ++(360/\N*1+15:1.5cm);
    \draw[] (P5) -- ++(360/\N*5+15:1.5cm);
    \begin{scope}
        \clip (6,3) rectangle (8,0);
        \draw[thick, gray] (9.4,2) circle(\R cm);
    \end{scope}
    \draw[dashed, color=red, line width = 2pt] (7,2) -- (P3);
    \draw[color=red, line width = 2pt] (7,2) -- (P1);
    \draw[color=red, line width = 2pt] (7,2) -- (P5);   
    \draw[color=red, line width = 2pt] (P1) -- (P3); 
    \draw[color=red, line width = 2pt] (P3) -- (P5); 
    \foreach \i in {0,1,...,\N} {
		\draw[fill=white] (P\i) circle(\rA cm);
	}

    \draw[color=black, fill = white, line width= 2pt] (7,2) circle(\rA cm);
    
\end{tikzpicture}}
	\caption{Example of a loophole (red circle) that is used in the proof of \Cref{lem:hardcliqueprops}, Part~\ref{lem:hardcliqueprops:notwons}.}
	\label{fig:loophole}
\end{figure}
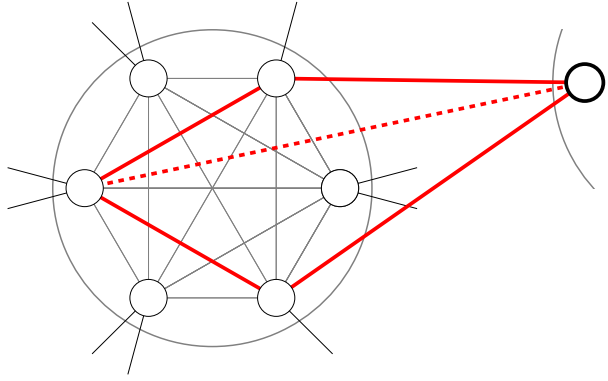
\section{Complementary Subroutines}
\label{sec:subroutines}
\subsection{Degree Splitting}
\begin{lemma}\label{thm:degreeSplitting}
	For every $\eps>0$, there are deterministic $O\big(\eps^{-1}\cdot\log\eps^{-1}\cdot\big(\log\log\eps^{-1}\big)^{1.71}\cdot \log n\big)$-round distributed algorithms for computing directed and undirected degree splittings with the following properties:
	\begin{enumerate}
		\item For directed degree splitting, the discrepancy at each node $v$ of degree $d(v)$ is at most $\eps \cdot d(v) + 1$ if $d(v)$ is odd and at most $\eps\cdot  d(v) + 2$ if $d(v)$ is even.
		\item For undirected degree splitting, the discrepancy at each node $v$ of degree $d(v)$ is at most $\eps\cdot d(v) + 4$.
	\end{enumerate}
\end{lemma}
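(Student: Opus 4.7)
The plan is to use the classical reduction of degree splitting to iterative maximal matching computations on an auxiliary multigraph of maximum degree $2$, and then plug in a state-of-the-art deterministic matching algorithm as the subroutine. I describe the undirected version first; the directed version follows by interpreting the two color classes at each vertex as ``outgoing'' and ``incoming'' edges, with the additive $+1/+2$ slack coming from the parity of $d(v)$.

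At each vertex $v$ of even degree, pair its incident edges arbitrarily into $d(v)/2$ pairs; if $d(v)$ is odd, one edge remains unpaired and contributes an additive constant to the final discrepancy at $v$. Construct the auxiliary multigraph $H$ whose vertex set is the collection of all pairs, and where each original edge $e=\{u,v\}\in E$ becomes an edge of $H$ joining the pair containing $e$ at $u$ to the pair containing $e$ at $v$. By construction every pair-vertex has exactly two incident edges in $H$, so $H$ decomposes into a disjoint union of paths and cycles. An alternating $2$-edge-coloring along every such path and cycle induces a $2$-coloring of $E(G)$ with zero discrepancy at every paired vertex.

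The challenge is that distributedly producing an alternating coloring of a long path would require diameter-many rounds, so I iterate instead. Compute a maximal matching $M$ in $H$; for each edge $e\in M$ assign color red, and force color blue on the two pair-mates of $e$ (the other edge in each of the two pairs $e$ touches). Edges that are neither matched nor forced become the residual and inherit a new pair structure for the next iteration. A short counting argument using maximality of $M$ shows that at every vertex the number of still-uncolored incident edges shrinks by a constant factor per iteration. After $O(\log \eps^{-1})$ iterations the residual contains at most $\eps\,d(v)$ edges at every vertex $v$, which are then colored arbitrarily to yield the claimed $\eps\,d(v)+O(1)$ discrepancy.

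Each iteration is dominated by a maximal matching computation on $H$. Using the deterministic MM algorithm from \cite{GG24}, combined with the max-degree-$2$ structure of $H$, gives a per-iteration cost of roughly $O(\log n\cdot(\log\log n)^{1.71})$; an additional $O(\eps^{-1}\log\eps^{-1})$ factor, needed to convert the geometric per-iteration decay of uncolored mass into the exact additive $\eps\,d(v)$ guarantee at \emph{every} vertex (rather than on average), yields the stated total bound. The main obstacle is coordinating the pair structure across iterations so that each iteration provably resolves a constant fraction of incident edges at \emph{every} vertex while keeping the deterministic matching subroutine efficient on the dynamically evolving auxiliary graph $H$; balancing per-vertex progress and cumulative runtime simultaneously is precisely what forces the specific $\eps^{-1}\log\eps^{-1}(\log\log\eps^{-1})^{1.71}\log n$ complexity.
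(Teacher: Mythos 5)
First, note that the paper does not actually prove this lemma: it is imported verbatim as a black box from prior work on distributed degree splitting (Ghaffari, Hirvonen, Kuhn, Maus, Rybicki, Suomela, DISC~2017), whose Theorem~1.1 this statement reproduces, including the exact constants and the $(\log\log\eps^{-1})^{1.71}$ factor. So your task was to reprove a known theorem from scratch, and your attempt has a genuine gap.

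The core problem is in the iteration ``compute a maximal matching $M$ in $H$, color matched edges red and force their pair-mates blue.'' In the path/cycle graph $H$, maximality of $M$ only guarantees that every \emph{unmatched} pair-vertex has both neighbors matched; it does not prevent both edges of an unmatched pair from being forced to the \emph{same} color. Concretely, take a path $p_1\,e_1\,p_2\,e_2\,p_3\,e_3\,p_4\,\dots$ with $e_1,e_4,e_7,\dots\in M$: then $e_2$ is the pair-mate of $e_1$ at $p_2$ and $e_3$ is the pair-mate of $e_4$ at $p_4$, so both edges of the pair $p_3$ are forced blue. Each such event contributes $2$ to the discrepancy at the original vertex owning that pair, and an adversarial maximal matching can realize this at a constant fraction of the pairs of a single vertex $v$, giving discrepancy $\Omega(d(v))$ rather than $\eps\, d(v)+O(1)$. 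Your analysis conflates ``few uncolored edges remain'' with ``small discrepancy''; the discrepancy comes from monochromatic pairs, which your scheme neither prevents nor counts. The known proof avoids this by cutting the paths and cycles of $H$ into segments of length $\Theta(1/\eps)$ and alternately coloring each segment, so that each vertex is charged only $O(1)$ error per cut it owns (this is where the $\eps^{-1}$ in the round complexity comes from), combined with a recursive boosting scheme. Relatedly, your runtime accounting does not produce the stated bound: the exponent $1.71$ in the lemma sits on $\log\log\eps^{-1}$, not on $\log\log n$, so it cannot originate from a maximal matching subroutine of \cite{GG24} (and maximal matching on the degree-$2$ graph $H$ takes only $O(\log^* n)$ rounds anyway); it arises from a specific defective-coloring-based subroutine inside the recursion of the original degree-splitting algorithm. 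Finally, the directed version is not obtained by merely relabeling the two undirected classes as in/out per vertex with a parity correction; in the original work the directed splitting is its own (in fact, the more primitive) construction.
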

\begin{corollary} \label{cor:degsplit}
	For $i\in\mathbb{N}_{>0}$ and given $\eps>0$, there exists a deterministic $O(\log n)$ round algorithm to split the edges of a graph into $k=2^i$ parts such that the following holds for any of the $k$ parts:

	For each node $v$ the number of edges contained in the part and incident to $v$ lies in the interval
	\begin{align}[deg(v)/2^i-\eps d(v)-a,deg(v)/2^i+\eps d(v)+a]~, \end{align}where $a=2\sum_{j=0}^{i-1}(1/2+\eps/4)^j$. \end{corollary}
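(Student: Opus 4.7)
The natural approach is to iterate the undirected degree splitting algorithm from \Cref{thm:degreeSplitting} exactly $i$ times: the first call partitions $E(G)$ into two subgraphs; we then recursively apply the splitting to each of the two halves; after $i$ levels we obtain the desired $2^i$ parts. Each invocation is performed with discrepancy parameter $\eps' \coloneqq \eps/2$, so that the guarantee from \Cref{thm:degreeSplitting} becomes: at every vertex $v$ of local degree $d$ in the current subgraph, the two sides $d_1,d_2$ satisfy $|d_1-d_2|\le (\eps/2)\, d + 4$, which in turn means each of $d_1,d_2$ lies in the interval $[d/2-(\eps/4)d-2,\, d/2+(\eps/4)d+2]$. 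Since $\eps'$ is a constant (when $\eps$ is), each call runs in $O(\log n)$ rounds by \Cref{thm:degreeSplitting}, and the $i$ invocations can be done sequentially on shrinking edge sets within the same $O(\log n)$ budget (with $i$ treated as a constant).

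\textbf{Key recursion and its analysis.} Let $d_j(v)$ denote the degree of $v$ in the part after following a fixed branch of $j$ recursive splittings, and set $r \coloneqq 1/2 + \eps/4$ and $\ell \coloneqq 1/2 - \eps/4$. From the per-level guarantee above one obtains the two recursions $d_{j+1}(v) \le r\cdot d_j(v) + 2$ and $d_{j+1}(v) \ge \ell\cdot d_j(v) - 2$. Unfolding these (standard geometric-series algebra) yields
\begin{equation*}
d_i(v) \le r^{\,i}\, d(v) + 2\sum_{j=0}^{i-1} r^{\,j} \qquad\text{and}\qquad d_i(v) \ge \ell^{\,i}\, d(v) - 2\sum_{j=0}^{i-1} \ell^{\,j}.
\end{equation*}
Noticing that $2\sum_{j=0}^{i-1} r^j = a$ (matching the quantity in the statement) and $2\sum_{j=0}^{i-1} \ell^j \le a$, both bounds are already in the right shape modulo the leading terms $r^i d(v)$ and $\ell^i d(v)$. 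These leading terms need to be compared with $d(v)/2^i$.

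\textbf{Controlling the leading terms.} A binomial expansion gives $r^i - 2^{-i} = 2^{-i}\bigl((1+\eps/2)^i - 1\bigr)$ and $2^{-i} - \ell^i = 2^{-i}\bigl(1 - (1-\eps/2)^i\bigr)$, both of which are bounded by $2^{-i}\cdot i\eps/2 \le \eps$ using $i/2^i \le 1$ for $i\ge 1$ (and the elementary inequality $(1+x)^i \le 1 + ix\cdot(\text{something})$ handled by Bernoulli). Multiplying by $d(v)$, we conclude $r^i d(v) \le d(v)/2^i + \eps\, d(v)$ and $\ell^i d(v) \ge d(v)/2^i - \eps\, d(v)$. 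Plugging these into the two displayed bounds gives the required two-sided containment in $[d(v)/2^i - \eps d(v) - a,\ d(v)/2^i + \eps d(v) + a]$.

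\textbf{Main obstacle.} The conceptual part is essentially mechanical; the only subtlety is the slight asymmetry between the upper recursion (governed by $r$) and the lower one (governed by $\ell$), which one must handle so that the final interval is symmetric around $d(v)/2^i$. This is resolved by the binomial estimate above and by noting $\sum \ell^j \le \sum r^j$, so a single $a$ suffices on both sides. A minor housekeeping point is that the statement only guarantees a bound on $d_1,d_2$ via the discrepancy $\eps' d + 4$ of \Cref{thm:degreeSplitting}; halving the discrepancy is what introduces the additive constant $2$ (and hence the factor $2$ in $a$). With these observations in place, the runtime bookkeeping is immediate: $i$ calls to the degree-splitting algorithm at constant parameter each cost $O(\log n)$ rounds, and the analysis above gives the claimed interval, completing the proof.
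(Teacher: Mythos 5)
Your proof is correct and follows essentially the same route as the paper: iterate the undirected splitting of \Cref{thm:degreeSplitting} with parameter $\eps/2$, unfold the recursion $d_{j+1}\le (1/2+\eps/4)\,d_j+2$ into the geometric sum $a$, and compare the leading terms $(1/2\pm\eps/4)^i$ with $2^{-i}$ (the paper packages this comparison as \Cref{claim:splittingDiscrapancy}, proved by induction). One small caveat: for the upper leading term your appeal to Bernoulli goes the wrong way, since $(1+x)^i\ge 1+ix$; use instead the crude binomial bound $(1+x)^i-1\le x(2^i-1)$ for $0\le x\le 1$, which yields $r^i\le 2^{-i}+\eps/2$ and is exactly the content of \Cref{claim:splittingDiscrapancy}.
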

\begin{proof}
	Pick an $\eps'=\eps/2$ and do the analysis like on the board of bounded the max and min degree by induction and use \Cref{claim:splittingDiscrapancy}
	We prove the claim by induction on $i$ on the iterations that we get the following after $i$ recursive splits on the previous result.
	\begin{align*}a=2\sum_{j=0}^{i-1}(1/2+\eps/4)^j=2\cdot \frac{(\frac{1}{2}+\eps/4)^i}{\frac{1}{2}-\eps/4}=4\cdot(\frac{1}{2}+\eps/4)^i(1-\eps/2)^{-1} & \qedhere \end{align*}
\end{proof}

\begin{claim}
	\label{claim:splittingDiscrapancy}
	$(1/2+\eps)^n\leq 1/2^n+2\epsilon$ holds for integers $n\geq 1$ and $0\leq \eps\leq 1/2$.
\end{claim}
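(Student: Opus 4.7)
The plan is to convert the inequality into a statement about a convex function, then invoke the fact that a convex function on a compact interval attains its maximum at an endpoint. Introduce the substitution $x = 1/2 + \eps$, so $x$ ranges over $[1/2, 1]$ as $\eps$ ranges over $[0, 1/2]$. The target inequality $(1/2+\eps)^n \leq 1/2^n + 2\eps$ rearranges to
\[
g(x) := x^n - 2x + 1 \leq 1/2^n,
\]
so it suffices to bound $g$ by $1/2^n$ throughout $[1/2, 1]$. A direct evaluation yields $g(1/2) = 1/2^n$ and $g(1) = 0$, so the bound already holds at the endpoints and the larger endpoint value is $1/2^n$, attained at $x = 1/2$.

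Next I would show $g$ is convex on $[1/2, 1]$. Computing $g''(x) = n(n-1)x^{n-2} \geq 0$ for $n \geq 1$ (with the case $n = 1$ trivial since $g(x) = 1 - x$ is linear) gives this. Because a convex function on a compact interval is bounded above by the maximum of its endpoint values, we conclude $g(x) \leq \max(g(1/2), g(1)) = 1/2^n$ for all $x \in [1/2, 1]$. Substituting back gives the claim.

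If one prefers to avoid the endpoint-maximum principle, the symmetric formulation is to consider $h(\eps) = 1/2^n + 2\eps - (1/2+\eps)^n$ directly: then $h''(\eps) = -n(n-1)(1/2+\eps)^{n-2} \leq 0$, so $h$ is concave with $h(0) = 0$ and $h(1/2) = 1/2^n$, and a concave function is at least the minimum of its endpoint values on the interval, so $h \geq 0$. I would deliberately avoid a naive induction on $n$: multiplying the inductive hypothesis by $(1/2+\eps)$ produces a leftover term of order $\eps/2^n + 2\eps^2$ which cannot be absorbed into the slack $2\eps$ on the whole range $\eps \in [0, 1/2]$, so the inductive step fails without an additional strengthening. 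The convexity route sidesteps this issue entirely and is essentially a one-line argument once the substitution is made.
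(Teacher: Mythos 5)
Your proof is correct, and it takes a genuinely different route from the paper's. The paper argues by induction on $n$: the base case is immediate, and the inductive step multiplies the hypothesis by $(1/2+\eps)$ to obtain $(1/2+\eps)^{n+1}\le 1/2^{n+1}+\eps+\eps\,(1/2^n+2\eps)$, then absorbs the last term by asserting that $1/2^n+2\eps\le 1$ for all $0\le\eps\le 1/2$ and $n\ge 1$. That assertion is false near the right endpoint (at $\eps=1/2$ it reads $1/2^n+1\le 1$), so the paper's inductive step as written does not go through on the full range --- this is exactly the failure mode you predicted when explaining why you avoided the naive induction, and your diagnosis of the leftover term $\eps/2^n+2\eps^2$ is on point. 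Your convexity argument sidesteps the issue entirely: with $x=1/2+\eps$, the function $g(x)=x^n-2x+1$ is convex on $[1/2,1]$ (linear for $n=1$, and $g''(x)=n(n-1)x^{n-2}\ge 0$ otherwise), takes the values $g(1/2)=1/2^n$ and $g(1)=0$, and a convex function on a compact interval is bounded above by the larger of its endpoint values; all of these steps check out, as does your dual concavity formulation. What your approach buys is a proof that is actually valid on the whole stated range (and in fact yields the slightly stronger chord bound $(1/2+\eps)^n\le 1/2^n+2\eps\,(1-1/2^n)$), at the cost of invoking a small amount of calculus. If one wants to stay with the paper's elementary induction, it can be repaired by strengthening the inductive hypothesis to that chord bound, for which the absorption condition becomes $\eps\le 1/2$ and the step closes; but your route needs no such repair.
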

\begin{proof}
	\textbf{Induction start ($n=1$):} $(1/2+x)\leq 1/2+2x$ holds.

	\textbf{Induction step ($n\rightarrow n+1$)}
	\begin{align*}
		\left(\frac{1}{2}+\eps\right)^{n+1} & =\left(\frac{1}{2}+\eps\right)\cdot \left(\frac{1}{2}+\eps\right)^n\stackrel{I.H.}{\leq} \left(\frac{1}{2}+\eps\right) \left(\frac{1}{2^n}+2x\right) \\
		                                    & = 1/2^{n+1}+x+x\cdot (1/2^n+2x)\leq 1/2^{n+1}+2x,
	\end{align*}
	where we used $1/2^n+2x\leq 1$ for $x\leq 1/2$ and $n\geq 1$ in the last step.
\end{proof}

\subsection{List Coloring}

\begin{lemma}[List coloring \cite{MT20}]
	\label{lem:listColoring}
	There is a deterministic distributed algorithm to $(deg+1)$-list-color any graph with maximum degree $\Delta$ in $O(\sqrt{\Delta\log\Delta})+O(\log^* n)$ rounds.
\end{lemma}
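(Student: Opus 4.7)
The plan is to prove the bound via a two-stage scheme: an $O(\log^* n)$ preprocessing that breaks symmetry down to a $\poly(\Delta)$-sized structure, followed by an iterative "peeling" that assigns colors in blocks so that roughly $\sqrt{\Delta/\log\Delta}$ colors are handled per outer iteration, giving $O(\sqrt{\Delta\log\Delta})$ outer iterations in total.

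\textbf{Preprocessing.} Invoke Linial's classical color reduction to compute an $O(\Delta^2)$-vertex coloring $\phi$ of the underlying graph in $O(\log^* n)$ rounds. The coloring $\phi$ ignores the lists; it is used only as an auxiliary schedule that lets us resolve local conflicts in $O(1)$ rounds inside balls of radius one.

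\textbf{Main loop.} Partition the color universe into blocks $B_1,\dots,B_t$ of size $k=\Theta(\sqrt{\Delta/\log\Delta})$, so $t=O(\sqrt{\Delta\log\Delta})$. Process blocks one by one. For block $B_i$: every still-uncolored vertex $v$ with $L(v)\cap B_i\neq\emptyset$ becomes \emph{active} and attempts to claim a color from $L(v)\cap B_i$; conflicts among active vertices are resolved in $O(1)$ rounds by letting them act in order of their $\phi$-label, each taking the smallest $B_i$-color still available in its residual palette. Colored vertices notify neighbors, who delete the taken color from their lists.

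The correctness rests on the invariant that at all times every uncolored vertex $v$ satisfies $|L(v)|\ge \deg^*(v)+1$, where $\deg^*(v)$ is $v$'s number of uncolored neighbors. This is preserved because whenever a neighbor $u$ of $v$ is colored, $v$ loses at most one palette color (only if that color was in $L(v)$) and exactly one active neighbor, so the slack $|L(v)|-\deg^*(v)\ge 1$ never shrinks. In particular, when the final block is processed, every still-uncolored vertex has at least one available color in its residual palette, hence the coloring completes.

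\textbf{Cost and main obstacle.} The preprocessing contributes $O(\log^*n)$, and the main loop has $t=O(\sqrt{\Delta\log\Delta})$ iterations of $O(1)$ rounds each, yielding the stated bound. The delicate step is arguing that one block can indeed be processed in $O(1)$ rounds: a priori, a vertex may have up to $\Delta$ neighbors competing for colors in $B_i$, so a naive round-per-$\phi$-class schedule would cost $\Omega(\log\Delta)$. The crucial trick is that since $|B_i|=k=\Theta(\sqrt{\Delta/\log\Delta})$, within one block every vertex's competing neighbors contribute at most $k$ distinct color-choices; a Linial-style descent using $\phi$ then collapses the within-block instance in $O(1)$ rounds, and a counting argument shows that $k\le\sqrt{\Delta/\log\Delta}$ is exactly the threshold for which this descent succeeds deterministically. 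Establishing this intra-block $O(1)$-round primitive is the main work of the proof; once it is in hand, the peeling argument and palette-slack invariant above glue the rounds together to give the $O(\sqrt{\Delta\log\Delta})+O(\log^* n)$ complexity.
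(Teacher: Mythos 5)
The paper does not prove this lemma at all: it is imported verbatim from Maus and Tonoyan \cite{MT20} as a black-box subroutine, so there is no in-paper proof to compare against. Judged on its own merits, your sketch has a genuine gap exactly at the step you yourself flag as ``the main work of the proof,'' and as described that step is not merely unproven but false. Bounding the block size by $k=\Theta(\sqrt{\Delta/\log\Delta})$ limits the number of \emph{colors} under contention in one block, not the number of \emph{competing neighbors}: a vertex can still have up to $\Delta$ active neighbors whose lists all meet $B_i$, so the conflict graph restricted to a single block still has maximum degree $\Delta$. Processing active vertices ``in order of their $\phi$-label'' is a sequential schedule over $O(\Delta^2)$ color classes and hence costs $O(\Delta^2)$ rounds per block, and no Linial-style descent collapses a maximum-degree-$\Delta$ list-coloring instance in $O(1)$ deterministic rounds --- if such a primitive existed, one could iterate it to $(\deg+1)$-list-color in $O(\log^*n)$ rounds outright, contradicting known $\Omega(\sqrt{\Delta})$-type barriers for this family of techniques. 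The asserted ``counting argument'' that $k\le\sqrt{\Delta/\log\Delta}$ is ``exactly the threshold'' is not supplied and does not follow from anything stated.

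The actual argument in \cite{MT20} is structurally different. Rather than peeling blocks of the color universe and coloring each block completely, it partitions the color space into roughly $\sqrt{\Delta/\log\Delta}$ parts and computes a \emph{list defective coloring}: each vertex is assigned to one part so that (a) it retains enough of its list within that part and (b) the number of same-part neighbors with intersecting lists (the defect) drops to $O(\sqrt{\Delta\log\Delta})$. The sub-instances for different parts are then vertex-disjoint low-degree $(\deg+1)$-list-coloring instances that can be solved in parallel by a standard $O(d^2)+O(\log^* n)$-type routine; the $\sqrt{\Delta\log\Delta}$ bound comes from balancing the number of parts against the residual defect, not from an $O(1)$-round per-block guarantee. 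Your palette-slack invariant ($|L(v)|-\deg^*(v)\ge 1$ is preserved when a neighbor is colored) is correct and standard, but it only gives correctness of greedy completion, not the round complexity; the complexity claim is where the proposal does not go through.
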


\end{document}